\documentclass[runningheads]{llncs}
\usepackage{array}

\usepackage{tikz}
\usetikzlibrary{trees}
\usepackage[T1]{fontenc}

\usepackage{amsmath,amssymb}
\usepackage{graphicx}
\begin{document}
\title{ A Post-Quantum Secure Lattice-Based Forward-Secure Identity Based Encryption with Applications to Internet of Things Architecture}
\titlerunning{ }
 
%

\author{\large\bf Abhishek Kumar\inst{1} \and
 Vikas Srivastava \inst{2,*}  \and Sumit Kumar Debnath\inst{3}  \and
 Pantelimon St\u anic\u a \inst{4}  }
\authorrunning{A. Kumar et al.}
%
\institute{Department of Mathematics
National Institute of Technology Jamshedpur
Jamshedpur-831014, India\\
\email{ krabhishek1618@gmail.com}\\
\and Department of Mathematics National Institute of Technology Warangal, Hanamkonda, Telangana-506004
 \\
\email{ vsv@nitw.ac.in, vikas.math123@gmail.com}
\and Department of Mathematics
National Institute of Technology Jamshedpur
Jamshedpur-831014, India
 \\
\email{sdebnath.math@nitjsr.ac.in}
  \and
Department of Applied Mathematics
Naval Postgraduate School, Monterey, CA 93943, USA\\
\email{  pstanica@nps.edu}
}

\maketitle              
\begin{abstract}
The rapid expansion of the Internet of Things (IoT) has led to an unprecedented scale of data exchange across heterogeneous and resource-constrained devices. Ensuring confidentiality and secure key management in such environments is challenging. Traditional public-key infrastructures require heavy certificate-handling overhead. Identity-Based Encryption (IBE) offers a lightweight alternative by deriving public keys directly from device identities, making it attractive for IoT deployments. However, IoT devices are highly vulnerable to side-channel and key-extraction attacks, motivating the need for Forward-Secure IBE ({\sf FS-IBE}), where the compromise of a current secret key does not threaten past communications.

Existing {\sf FS-IBE} constructions based on classical hardness assumptions are not secure in the era of post-quantum, while the  lattice-based (LWE-based) forward-secure scheme suffer from large key and ciphertext sizes, limiting their suitability for constrained IoT systems. Here, we propose a new lattice-based fs-IBE scheme in the ring setting, relying on the RLWE assumption to achieve post-quantum security and significant efficiency gains. Our design uses trapdoor delegation with a minimal-cover mechanism over a binary tree. It results in compact public parameters and efficient per-epoch key updates. Compared to prior LWE-based constructions, our scheme reduces public key, secret key, and ciphertext sizes, and thus, making it better suited for practical IoT environments.

\keywords{Forward Security \and Ring-LWE \and Identity-Based Encryption \and IoT Security \and Lattice-Based Cryptography}
\end{abstract}

\section{Introduction}

The Internet of Things (IoT) represents a modern communication paradigm in which a wide range of physical objects such as devices, sensors, vehicles, and buildings are connected to the Internet. These objects are equipped with embedded electronics, software, and sensing modules that enable data collection and exchange \cite{zanella2014internet}. Many smart IoT applications rely on users’ personal or sensitive information to deliver customised and intelligent services \cite{nadikattu2018iot,srivastava2023multivariate}. As the number of  IoT devices increases, protecting the privacy and security of transmitted data has become a critical necessity \cite{mohanty2023quantum}.

In general, secure communication between IoT devices is achieved using cryptographic tools. Public-key techniques, including encryption and digital signatures, are commonly deployed to ensure confidentiality, data integrity, and authentication \cite{shafagh2014security,kelly2018optimisation}. Although these approaches provide strong security guarantees, they depend heavily on certificate management~\cite{hoglund2020pki4iot}. This introduces considerable computational and storage overhead, which is not suitable for IoT devices that are resource-constrained. To deal with this limitation,  an encryption scheme based on the identity of the user \cite{shamir1985identity} has been explored as an alternative for IoT settings. In IBE, the unique identity of a device, such as its ID number, acts as its public key. A device obtains its secret key from  Private Key Generator (PKG) \cite{zhang2024secure}. This eliminates the need for certificates and reduces management complexity \cite{sharma2022blockchain}. In cryptography, a scheme's security is ultimately dependent on the protection of secret keys. Studies have shown that side-channel attacks, which exploit power consumption patterns, electromagnetic leakage, or other physical characteristics of IoT devices, can be used to extract these keys~\cite{devi2020side}. Thus, while IBE removes certificate-related overhead, it is unable to protect against key-extraction attacks. Forward security provides an effective countermeasure in such scenarios. In a forward-secure design, if the current secret key is compromised still the previously shared encrypted data cannot be revealed~\cite{canetti2003forward,lu2011efficient}. This property ensures that past communication remains protected even if an adversary later obtains the private key \cite{yao2004id}. Another challenge arises from the fact that most traditional IBE schemes rely on classical hardness assumptions of problems like discrete logarithms or the factorisation of integers. With the help of quantum computers, these assumptions are expected to become insecure. Cryptography based on lattice has appeared as an effective solution, as lattice problems are believed to remain unbreakable even for quantum algorithms \cite{ajtai1996generating}. Motivated by this, Jin et al. in \cite{jin2024lattice} proposed a forward-secure IBE construction based hardness assumption of learning with error problem in lattice. However, their solution suffers from efficiency limitations, particularly due to large secret keys and ciphertexts. So, it is an urgent need to design an efficient post-quantum forward secure identity-based encryption scheme for IoTs.
 
 \subsection{Related Work} 
 The evolution of forward-secure and identity-based encryption schemes has relied on a wide range of cryptographic foundations, each offering distinct strengths and limitations. Shamir was the first who gave the idea of identity-based encryption in his work \cite{shamir1985identity}. Later in 2001, Boneh el al. proposed a fully functional IBE scheme in~\cite{boneh2001identity}. The scheme achieved security against chosen-ciphertext attacks in the random-oracle model based on the variant of the Computational Diffie–Hellman assumption. The construction was formulated using bilinear pairings over appropriate algebraic groups. The construction removed the need for certificate management, but remained vulnerable to quantum adversaries. In 2003, Canetti et al. introduced the concept of binary tree encryption and presented the first
non-interactive public-key encryption schemes with forward-security in~\cite{canetti2003forward}. The main construction was secure against chosen plaintext attack in the standard model under the decisional bilinear Diffie–Hellman assumption and the parameter sizes were designed to grow only logarithmically with the total epoch. Later, another scheme was formulated in the random-oracle model, which was more efficient and both designs were shown to be extendable to chosen-ciphertext security. But the schemes suffer the threat of quantum attacks along with the associated significant overhead of public key certificates. One year later, a scalable HIBE scheme with forward security was proposed by Yao et al. in~\cite{yao2004id}.  It was further shown that this construction can be adapted to obtain a public-key broadcast encryption scheme with forward security.  The fs-HIBE framework was also extended to support collusion-resistant multi-hierarchical identity-based encryption. The security of these schemes was analyzed under the bilinear Diffie–Hellman assumption within the random-oracle model. Despite these advancements, these schemes suffered the risk of quantum attack. Also, the high computational cost affected the scalability in resource-constrained environments like IoTs. Building on these advancements, Lu et al.~\cite{lu2011practical} proposed a practical FS-PKE scheme in 2011 that demonstrates security against selective-time-period and adaptive chosen-plaintext attacks in the standard model. It is shown that its performance parameter and total epoch are independent of each other, making the scheme more efficient than earlier FS-PKE constructions. It was further demonstrated how chosen-ciphertext security can be obtained in both the standard model as well as the random-oracle model.
But again, the scheme relies on hardness assumptions of classical hard problems which are not secure against quantum attack. Moreover, as a public-key encryption system, it inherits the traditional drawbacks of certificate-based infrastructures, including certificate generation, distribution, and revocation overhead. Singh et al. improved their own work \cite{singh2012lattice} by reducing the ciphertext size which is IND-sID-CPA (semantic) secure in the random-oracle model \cite{singh2013lattice}. Though these schemes were secure against quantum attacks but still faced the problem of high computational cost and large key size.
Another identity-based encryption from lattice with forward security and  additional property of keyword search was given by Yang et al.~\cite{yang2023fs}, where the key-management limitations were addressed by extending the FS-PEKS framework into an identity-based setting, leading to the formulation of a lattice-based FS-IBE with keyword search (FS-IBEKS) scheme. And it was proven to have indistinguishable selective-identity chosen-plaintext security model (IND-sID-CPA) within the random-oracle framework. To further strengthen the guarantees, an additional FS-IBEKS construction was proposed in the standard model, and its security was proven under adaptive-identity, chosen-plaintext attacks (IND-ID-CPA).

Hierarchical Identity-Based Encryption (HIBE) extends the concept of Identity-Based Encryption (IBE) by distributing key generation responsibilities across different hierarchical levels, thereby reducing the workload on the central private key generator. Due to this hierarchical structure, HIBE is well suited for large-scale organizations and distributed systems, as it also helps in limiting the impact of secret key exposure. A lattice-based HIBE construction with improved efficiency and security was proposed in \cite{jiang2020chosen}. The authors presented a CPA-secure HIBE scheme based on the Ring Learning With Errors (R-LWE) problem and further extended it to achieve adaptive CCA security under the hardness assumption of the Shortest Vector Problem (SVP). However, the scheme does not provide forward security, which is an important requirement in IoT applications where long-term key exposure can compromise previously transmitted data.

In 2024, Jin et al. \cite{jin2024lattice}   constructed an IBE scheme with forward security based on lattice for Internet of Things (IoTs), secure against quantum attacks. It is built on the hardness assumption of LWE. The scheme employs a mechanism of minimal-cover within a binary-tree structure, and its security is rigorously proven to possess forward-secure, selective-identity, chosen-plaintext attack model. It has also indistinguishable selective ID chosen ciphertext attack security. But the scheme is not efficient due to higher communication and storage overhead for IoT devices.    The trade-off between efficiency, security level, and post-quantum resilience in  these cryptographic constructions are summarized in Table~\ref{tab1}.

\vskip2em
 \begin{table*}[!htb]
\caption{{Techniques, advantages and limitations of  related works}} \label{tab1} 
\centering
\resizebox{0.99\textwidth}{!}{%
\begin{tabular}{|p{2.3cm}| p{4.3cm}| p{6.6cm}| p{4.6cm}| }  \hline
Scheme	               & Cryptographic Techniques             &  Advantages              & Drawbacks/Limitations \\ \hline \hline
 Boneh and Franklin   \cite{boneh2001identity} & *  Billinear Diffie-Hellman assumption \newline * Bilinear pairings on elliptic cureves \newline *Identity based encryption \newline *Hash function  &   *No requirment of public key certificates \newline * Secure against Chosen ciphertext attack in random oracle model     &  *  Not secure against quantum attack \\ \hline

 Canetti et al. \cite{canetti2003forward}. & *Public key encryption   \newline *Hash function &  * Forward secure against chosen ciphertext attack  \newline * Efficient key update \newline * Secure against Chosen plaintext attack & * Not secure against quantum attack   \newline *Certificate managment problem  \newline  \\ \hline

Yao et al. \cite{yao2004id} & * Bilinear Diffie-Hellman (BDH) assumption\newline * Tree based time evolution of private key\newline * Idetity-based broadcast encryption & * Forward secure against chosen ciphertext attack
    &   * Large ciphertext size and computational overhead  \newline * Not quantum resistance \\ \hline

Lu et al. \cite{lu2011practical} & *Learning with error hardness assumption \newline *Forward secure public key encryption  \newline *Gaussian sampling on lattices  & * Forward secure against selective-time period chosen ciphertext attack(FS-ST-CCA) \newline *
Forward secure against selective-time period chosen plaintext  attack (FS-ST-CPA) &  *Higher storage and computational cost \newline * Load of Certificate managment  \newline * Not secure against quantum  attack   \newline \\ \hline

Singh et al. \cite{singh2013lattice} & *  Hardness assumption of LWE \newline *Hash functions modeled as a random oracle \newline *Identity based encryption on lattices &   *Post-quantum security   \newline *Forward secure against chosen plaintext attack \newline * Shorter ciphertext size &  * Large key size and  \newline *High computational cost  \newline *Inefficient   \newline \\ \hline

Yang  et al. \cite{yang2023fs} & *Trapdoor generation and delegation algorithms \newline *Decisional learning with error \newline *Identity-based encryption on lattices    \newline & * No requirment of public key certificates \newline *Forward secure against chosen plaintext attack \newline *Post-quantum security &  *  High computational cost \newline * Large ciphertext and key sizes  \newline *Inefficient \\ \hline

Jin et al. \cite{jin2024lattice}. & *  Learning with error hardness assumption \newline *Identity based encryption \newline * Mechanism of minimal cover in context of binary tree  &  *Postquantum security  \newline * Secure against chosen ciphertext attack \newline * Secure against Chosen plaintext attack & *Large key and ciphertext size \newline *complexity of implementation \newline *Inefficient  \newline  \\ \hline

\end{tabular}
}
\end{table*}

\subsection{Our Contribution}
In this work, we put forward a new lattice-based fs-IBE scheme in the ring setting. We now summarize the important technical contribution of our work.
\begin{itemize}
  \item We design the \textit{first}, to the best of our knowledge, RLWE-based
  forward-secure IBE scheme (namely \textsf{RFS-IBE}) that combines a binary tree
  structure with the minimal-cover mechanism and trapdoor delegation over
  ideal lattices.  The construction supports per-epoch key updates for each
  identity while keeping the public parameters compact.

  \item We give a formal security analysis of \textsf{RFS-IBE} and prove that
  it achieves forward-secure selective-identity CPA security in the random
  oracle model under the decisional RLWE assumption via a reduction to a dual
  Regev-type encryption scheme.  We further outline how to extend the
  construction to obtain CCA security using standard transformations.

  \item We optimize the sizes of secret keys, public keys and ciphertexts by
  working in the ring setting instead of the standard LWE setting.  Compared
  with the lattice-based forward secure of IBE of Jin et al.~\cite{jin2024lattice}, our
  scheme achieves strictly smaller key and ciphertext sizes, which directly
  translates into lower communication and storage overhead for IoT devices.

  \item We illustrate how \textsf{RFS-IBE} can be integrated into a practical
  IoT architecture, with a hospital backend server acting as the key
  generation center and gateways / medical sensors playing the role of IBE
  users.  This case study shows that our construction can protect historical
  medical data against key-extraction attacks and side-channel leakage while
  remaining efficient enough for deployment on constrained devices.
\end{itemize}


\section{Preliminaries}
Here, we briefly summarise the preliminary information required to understand the proposed protocol. 
Lets begin with the notations used throughout the paper (refer Table \ref{tab2}). Then we will describe the concept of lattices, like the integer lattice and ideal lattice, along with their hard problems, specifically LWE and RLWE. We will understand the general formation of the forward secure identity-based encryption scheme along with its security model. We will also look at dual Regev type public key encryption based on the RLWE, followed by some important algorithms like {\sf ringGenTrap, ringExtBasis, ringRandBasis and ringSampleD}. We will also revisit the concept of binary tree and the trapdoor assignment based on it.

\subsection{Notations}
The notations that  are used throughout this paper are listed in the table \ref{tab2}.
\begin{table}
\caption{ Table of Notations }
     \centering
     \begin{tabular}{|c|c|}\hline
     $\lambda$ & Security parameter\\\hline
       $R_q$   & The ideal ring given by $R_q$ = $ \mathbb{  Z}_q[x]/\langle x^n + 1 \rangle$.\\\hline
        $\mathcal{B}$   & Probabilistic poly-time algorithm.\\\hline
     $y \leftarrow \mathcal{B}(x) $     &   Algorithm $\mathcal{B}$ take $x$ as input and outputs $\mathcal{B}(x)$.  \\\hline
        $\|\mathbf{s}_i\|$  & $\ell_2$-norm of the vector $\mathbf{s}_i$  from  set of vectors $S := \{\mathbf{s}_0, \ldots, \mathbf{s}_{k-1}\}$ \\\hline
       $ \|S\|$& The norm of the set $S$ defined as $\max_{i \in \{0, \ldots, k-1\}}$$\|\mathbf{s}_i\|$ \\\hline
         $\tilde{S}$ & The Gram--Schmidt orthogonalization of the vectors in this set. \\\hline
          $\|\tilde{S}\|$ & The Gram--Schmidt norm of $\tilde{S}$ \\\hline
          $\chi$ & The discrete Gaussian distribution   \\\hline
     \end{tabular}
     \label{tab2}
 \end{table}
   
\subsection{Lattices background}  
A lattice, denoted by $ \Lambda$, in $\mathbb{R}^n$ is the set of all integer linear combinations of linearly independent basis vectors $B=\{\mathfrak{b}_1,\mathfrak{b}_2,\ldots,\mathfrak{b}_m\}$:
\[
\Lambda(B) = \left\{ \sum_{i=1}^{m} z_i \mathfrak{b}_i \mid z_i \in \mathbb{Z} \right\}.
\] 
Below, we describe the three families of integer lattices that arise naturally in cryptography.  
Let $q$ be a positive integer modulus, $A \in \mathbb{Z}_q^{n \times m}$ a matrix, and $\mathbf{u} \in \mathbb{Z}_q^n$ a vector.  
The three commonly used integer lattices are defined as follows:  

\[
\Lambda_q(A^T) = \Big\{ \mathbf{z} \in \mathbb{Z}^m \;\Big|\; \exists\, \mathbf{s} \in \mathbb{Z}_q^n 
\ \text{with}\ A^T \mathbf{s} \equiv \mathbf{z} \pmod{q} \Big\},
\]

\[
\Lambda_q^{\perp}(A) = \Big\{ \mathbf{z} \in \mathbb{Z}^m \;\Big|\; A \mathbf{z} \equiv \mathbf{0} \pmod{q} \Big\},
\]

\[
\Lambda_{q,\mathbf{u}}(A) = \Big\{ \mathbf{z} \in \mathbb{Z}^m \;\Big|\; A \mathbf{z} \equiv \mathbf{u} \pmod{q} \Big\}.
\]
  
{\it Discrete Gaussian:} Let $L$ be a subset of $\mathbb{Z}^{n}$, $c\in \mathbb{R}^{n}$ and $r$ be positive real number. We define
\[
    \rho_{r,c}(x) = exp\left(-\pi \frac{\|x-c\|^{2}}{r^{2}}\right)
\]
and for a set $L$, we let   
\[
\rho_{r, c}(L)= \sum_{x \in L}\rho_{r, c}(x). 
\]
Taking $c$ and $r$  as center and parameter, the discrete Gaussian distribution over $R$ can be defined in the following way: for all $x \in L$
\[
  \mathfrak{D}_{L, r, c}(x)= \frac{\rho_{r,c}(x)}{\rho_{r,c}(L)}.
\]

 Today, the hardness of several computational problems over lattices forms the backbone of post-quantum security. In the context of the proposed design, we are mainly concerned with the Ring Learning with Errors and its variants. 

LWE allows hardness reduction from the worst-case lattice problem, which is hard even for quantum computers. Unfortunately, LWE needs a larger computational time and key size. So, to improve efficiency, we can replace it with Ring LWE and its variants, which supports the construction of post-quantum secure cryptography.

\subsection{ Ring LWE}

For prime $q$ and $n$  as an integer power of 2 the polynomial ring can be defined as 
$ R_q = \mathbb{Z}_q[x]/\langle x^n + 1 \rangle$, where  and $\mathbb{Z}_q[x]$ is the set of polynomials over  $\mathbb{Z}_q$. Each block $n \times n$ of the corresponding matrix $A$  can be compactly expressed as a ring element
$a_i \in R_q$.

In the context of polynomial rings, lattices can be defined similarly.  
Let  $\mathbf{a} \in R_q^k$ a vector, and $u \in R_q$.  
Then the corresponding lattices are given by:  

\[
\Lambda_q(\mathbf{a}) = \Big\{ \mathbf{z} \in R^k \;\Big|\; \exists\, s \in R_q 
\ \text{such that}\ \mathbf{a}s \equiv \mathbf{z} \pmod{q} \Big\},
\]

\[
\Lambda_q^{\perp}(\mathbf{a}^T) = \Big\{ \mathbf{z} \in R^k \;\Big|\; \mathbf{a}^T \mathbf{z} \equiv 0 \pmod{q} \Big\},
\]

\[
\Lambda_{q,u}(\mathbf{a}^T) = \Big\{ \mathbf{z} \in R^k \;\Big|\; \mathbf{a}^T \mathbf{z} \equiv u \pmod{q} \Big\}.
\]
\\
{\it Ideal Lattice:} Consider an ideal $\mathfrak{I}$ of quotient ring $R= \mathbb{Z}[x]/\langle x^n + 1 \rangle$ then any sublattice of $\mathbb{Z}^{n}$ that corresponds to the ideal $\mathfrak{I}$ is called an ideal lattice. It can be proved easily that the quotient ring $R$ is isomorphic to the ring~$\mathbb{Z}^{n}.$

\noindent {\it Ring-LWE}. To describe the LWE in the ring setting, the matrix $A$ is required to satisfy an additional algebraic structure.  
Each column of $A$ can be interpreted as the coefficient vector of a polynomial $p(x)$ of degree at most $n-1$. Moreover, for some polynomial $f(x)$ of degree $n$, the vector of coefficients corresponding to $x \cdot p(x) \bmod f(x)$ must also appear as one of the columns of $A$.  If we assume $m = nk$ for an integer $k$, then the $i$-th $n \times n$ block of $A$ can be identified with a ring element $a_i \in R_q$.  
In this setting, the secret vector $\mathbf{s}$ is represented by an element $s \in R$, the error vector $\mathbf{e}$ corresponds to $(e_1, \ldots, e_k)^T \in R^k$, and the output vector $\mathbf{b}$ is expressed as $(b_1, \ldots, b_k)^T \in R_q^k$~\cite{lyubashevsky2013toolkit}.  

The multiplication of a block of $A$ with $\mathbf{s}$ is then equivalent to multiplication of the corresponding ring elements $a_i$ and $s$.  Hence, the \emph{search version} of the ring-LWE problem is to determine the secret $s$ given the set of samples  
\[
\{(a_i, b_i = a_i s + e_i)\}_{i=1}^k.
\]  
The \emph{decision version} instead asks to distinguish such samples from a uniformly random one~\cite{lai2014trapdoors}. 

Due to this structure, cryptosystems built on ring-LWE  
assumptions achieve  higher efficiency compared to general
lattice-based constructions:
\begin{enumerate}
    \item Matrices of size $n \times n$ can be compressed into single 
    ring elements, reducing parameter sizes by about a factor of $n$.  
    \item The algebraic structure of \emph{ideal lattices} allows for 
    faster and more efficient computations.
\end{enumerate}

\begin{theorem}[\textup{\cite{lyubashevsky2010ideal}}]
Assuming that no polynomial-time quantum algorithm can approximate the Shortest Vector Problem (SVP) on ideal lattices in
$R$ within any polynomial factor, then any polynomially bounded collection of ring-LWE samples is indistinguishable from random for every polynomial-time adversary, including quantum ones.
\end{theorem}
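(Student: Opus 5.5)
The statement is the worst-case to average-case hardness theorem of Lyubashevsky, Peikert and Regev, quoted from \cite{lyubashevsky2010ideal}. The plan follows their proof in three stages.

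\textbf{Stage 1: quantum reduction to search RLWE.} Reduce approximate SVP on ideal lattices in $R$ to the search version of ring-LWE, adapting Regev's iterative quantum reduction to the ring setting. Start with a discrete Gaussian sample over an ideal lattice $\mathfrak{I}$ at a very large width, which is easy to produce.

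Each iteration has two parts. First, a classical step uses an RLWE search oracle, together with the current short Gaussian samples from $\mathfrak{I}$, to solve a bounded distance decoding (BDD) problem on the dual ideal $\mathfrak{I}^\vee$. Second, a quantum step uses this BDD solver to produce discrete Gaussian samples over $\mathfrak{I}$ of somewhat smaller width. This uses a quantum Fourier transform and the uncomputation trick.

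Iterating until the width reaches roughly $\sqrt{n}\cdot\lambda_n/\mathrm{poly}$ gives short vectors of $\mathfrak{I}$. Because ideal lattices satisfy $\lambda_1=\lambda_n$ up to a $\sqrt n$ factor, these yield a polynomial-factor approximation to SVP.

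The errors should be elliptical Gaussians in the canonical embedding, with the error parameters randomized. This keeps the ring structure consistent under multiplication.

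\textbf{Stage 2: search-to-decision reduction.} This is where I expect the main obstacle. Take $q\equiv 1 \pmod{2n}$, so that $\langle q\rangle$ splits into $n$ prime ideals $\mathfrak{q}_i$, each with residue field $\mathbb{Z}_q$. The Galois group of the cyclotomic field acts transitively on the $\mathfrak{q}_i$.

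The reduction then proceeds in four steps.
\begin{enumerate}
\item Using a hybrid over the CRT components, a decision oracle yields a distinguisher that detects whether a single component $s \bmod \mathfrak{q}_i$ is correct.
\item Guessing each of the $q$ candidate values for that component, and randomizing the samples via $(a,b)\mapsto(a+a',\,b+a'g)$, recovers $s \bmod \mathfrak{q}_i$.
\item Applying the Galois automorphisms, which preserve the error distribution up to permutation of its parameters, transfers this to every $\mathfrak{q}_j$.
\item CRT reconstruction then gives $s$.
\end{enumerate}
The delicate points are keeping the error distribution invariant under automorphisms and amplifying the distinguisher's advantage, which may be only non-negligible and only for a random error parameter. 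For this I would first use the worst-case-to-average-case randomization over the secret, via $s\mapsto s+t$, and over the error widths, then a standard hybrid argument.

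\textbf{Stage 3: many samples and spherical errors.} Pass from a single sample to any polynomial collection using the standard re-randomization and hybrid arguments. Adjust to the spherical discrete Gaussian $\chi$ used in this paper by adding slightly wider noise and rounding. The result is the indistinguishability claimed in the statement.
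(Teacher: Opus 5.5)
Your Stages 1 and 2 follow the Lyubashevsky--Peikert--Regev argument, which is all the paper relies on, since it quotes the theorem without proof. That argument is a Regev-style iteration between a classical BDD step on $\mathfrak{I}^\vee$ and a quantum Gaussian-sampling step, giving search RLWE with \emph{worst-case} elliptical errors. It is followed by the CRT/Galois search-to-decision reduction for $q\equiv 1 \pmod{2n}$. One correction: in Stage 1 the widths are determined by $|\sigma_i(e)|$ for the unknown BDD offset $e$, so they are not randomized there. In the guessing step, $a'$ must be uniform modulo $\mathfrak{q}_i$ and $0$ modulo every other $\mathfrak{q}_j$. A fully uniform $a'$ would also randomize the components you are not guessing. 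The real problem is Stage 3.

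The step ``adjust to the spherical $\chi$ by adding slightly wider noise'' does not work as stated. The error coming out of Stages 1--2 is elliptical, with parameters $r_i\le\alpha$ that the reduction does not know. So it cannot add the complementary widths $\sqrt{\xi^2-r_i^2}$. Adding spherical noise of width $\xi'$ gives widths $\sqrt{r_i^2+\xi'^2}$. These still differ from one another by a relative amount of order $\alpha^2/\xi'^2$ in each of the $n\ell$ real coordinates seen in $\ell$ samples, and the mismatch accumulates with $\ell$. A noise-flooding argument therefore hides it only if two conditions hold: $\ell$ is fixed in advance, and $\xi'$ exceeds $\alpha$ by a factor growing like $(n\ell)^{1/4}$, up to logarithmic factors. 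This is why Lyubashevsky--Peikert--Regev prove decision hardness with spherical error $D_\xi$ only for $\ell$ samples and $\xi=\alpha\,(n\ell/\log(n\ell))^{1/4}$. Otherwise they draw the error from a \emph{distribution over elliptical} Gaussians.

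The fix here is to fix the polynomial bound $\ell$ first and choose $\alpha=\xi\,(\log(n\ell)/(n\ell))^{1/4}$. The resulting approximation factor $\tilde O(\sqrt{n}/\alpha)$ is still polynomial, so the paper's hypothesis, which covers every polynomial factor, applies. You should also use randomized rounding, not plain rounding, together with $R^\vee=n^{-1}R$ for power-of-two $n$, to land on a discrete Gaussian over $R$.

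Finally, drop the ``single sample to polynomially many'' hybrid. All RLWE samples share the unknown secret $s$, so a hybrid over samples needs further samples under $s$, and one pair $(a,as+e)$ does not provide them. The step is also unnecessary, because both of your reductions already work with as many samples as the oracle requests.
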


 \subsection{General Construction of Forward-Secure Identity-Based Encryption}
\noindent
A forward-secure identity-based encryption ({\sf FS-IBE}) scheme   is described by a collection of five algorithms. $\mathcal{C}$ and $\mathcal{M}$  denote the  ciphertext space and message space respectively \cite{yao2004id}:

\begin{itemize}
    \item $(\mathit{params}, \mathit{msk}) \leftarrow {\sf FS-IBE.Setup}(\sf \lambda, N)$. The Private Key Generator (PKG) runs the setup procedure by taking the input $\lambda$ and $N$ as  a security parameter  and the total  epoch in the forward-secure system. It generates the public parameters $\mathsf{params}$  and the master secret key $msk$.
    
    \item $\mathit{sk_{id,0}} \leftarrow {\sf FS-IBE.KeyGen}(\mathit{params}, \mathit{msk})$.
    Given public parameters $\mathit{params}$, master secret key $\mathit{msk}$ and a user identity $id$, the algorithm outputs the initial secret key $sk_{id,0}$ corresponding to identity $id$.
    
    \item $\mathit{sk_{id,i+1}} \leftarrow {\sf FS-IBE.Update(\it params,  \mathit{sk_{id,i}})}$. For $params$ and a key $sk_{id,i}$ associated with the pair $(id,i)$, this procedure outputs the next secret key $sk_{id,i+1}$ for interval $(i+1)$, and securely removes $sk_{id,i}$.
    
    \item $\mathit{C}\leftarrow{\mathsf{FS-IBE.Encrypt}(\mathit{params}, \it id, M)}$. On input the public parameters $\it {params}$, the identity $id$ and an intervel pair $(id, t)$ and message $ M$, where $M \in \mathcal{M}$, the algorithm produces a ciphertext $C \in \mathcal{C}$.
    
    \item $({\mathit{M} or \perp}) \leftarrow {\sf FS-IBE.Decrypt(\mathit{params, C, sk_{id,i}}):}$ On input $\it {params}$, ciphertext $C$, secret key $sk_{id,i}$, with $C \in \mathcal{C}$, the algorithm recovers the message $M \in \mathcal{M}$ or outputs $\perp$ if decryption fails.
\end{itemize}

\noindent{\it Correctness}. If $sk_{id,i}$ is generated for the pair $(id,i)$ by either {\sf FS-IBE.KeyGen} or {\sf FS-IBE.Update}, then for any $M \in \mathcal{M}$, we have
\[
   {\sf FS\text{-}IBE.Decrypt}(\mathit{params}, \mathsf{FS\text{-}IBE.Encrypt}(\mathsf{params}, (id,i), M), sk_{id,i}) = M.
\] 

\subsection{Security Model}

The security notion for forward-secure IBE follows the framework of \cite{yao2004id}.  
We say that an FS-IBE scheme achieves \textit{forward-secure selective-identity chosen ciphertext security} ({\sf fs-sID-CCA}) if there is no adversary of probabilistic time which can succeed in the following game with more than negligible advantage.

\begin{description}
\item[{\sf Initialization}:] The adversary first commits to a challenge identity $id^{\ast}$. The challenger executes the \textsf{FS-IBE.Setup} algorithm and sends the public parameters ${\it params}$ to the adversary.
    
    \item[{\sf Phase 1 (Query Stage)}:]  For $i = 1, \ldots, m$, the adversary will make  $q_i$ queries and the  challenger react as: 
    \begin{itemize}
        \item {\sf Extraction query} $(id, j)$: The challenger generates the corresponding secret key $sk_{id,j}$ using \textsf{FS-IBE.KeyGen} and \textsf{FS-IBE.Update}, and returns it to adversary.
        \item {\sf Decryption query} $(id, C, j)$: The secret key, $sk_{id,j}$ is computed by the challenger using \textsf{FS-IBE.KeyGen} and \textsf{FS-IBE.Update}. It recovers the message  $M$ corresponding to the ciphertext $C$ with the help of the algorithm {\sf FS-IBE.Decrypt} and provides either the recovered message $M$ or the rejection symbol $\perp$.
    \end{itemize}
    
    \item[{\sf Challenge}:] Once {\sf Phase~1} concludes, messages $M_0, M_1 \in \mathcal{M}$ of equal length,  the challenge identity $id^{\ast}$ are submitted by the adversary along with target time period $t^{\ast}$.  
    The challenger computes $C^{\ast}$ using \textsf{FS-IBE.Encrypt}$(id^{\ast}, t^{\ast}, M_b)$ where $b$ is a ramdomly selected bit from \{0,1\}.
    and gives $C^{\ast}$ to the adversary.  
    One restriction applies: the adversary cannot request secret keys for $(id^{\ast}, t)$ with $t \leq t^{\ast}$.
    
    \item {\sf Phase 2 (Query Stage)}: For $i = m+1, \ldots, n$, adversary continues to make queries~$q_i$. The challenger reacts as in Phase~1, with the following rules:
    \begin{itemize}
        \item Extraction queries remain subject to the same restriction from the challenge step.
        \item Decryption queries $(id, C, j)$ are answered unless $(id, C, j) = (id^{\ast}, C^{\ast}, t^{\ast})$.
    \end{itemize}
    
    \item {\sf Guess}: The adversary finally give a bit $b' \in \{0,1\}$ as output. If $b'$ is same as $b$, the adversary is considered successful.
\end{description}
We can formulate adversary's advantage as:
\[
   \mathsf{Adv}_{\mathcal{A}, \mathcal{E}} = \big| \Pr[b = b'] - \tfrac{1}{2} \big|,
\]
where $\mathcal{A}$ is the adversary and $\mathcal{E}$ is the encryption scheme.

The above experiment models {\sf fs-sID-CCA} security.  If decryption queries are disallowed in the query phases, the resulting notion corresponds to {\sf fs-sID-CPA} security.

\subsection{RLWE-Based  PKE}
\label{sec.2.6}

{We now describe a public key encryption scheme that consists of the algorithms Key Generation, Encryption and Decryption. Its  depends on the hardness assumption of the RLWE problem. The scheme is shown to be CPA-secure. }\\

\noindent\textbf{Key Generation:}  
The secret key is selected as a short polynomial $s \sim \chi$.  
 $(\vec{a}, y)$ is the public key, where $\vec{a} \in R_q$ is chosen uniformly at random and  
\[
y = \vec{a} \cdot s + e \pmod{qR},
\]  
with $e \sim \chi$ being the error polynomial.

\medskip
\noindent\textbf{Encryption:}  
 For the encryption of the message $m \in R_{\{0,1\}}$, i.e., a polynomial with coefficients in $\{0,1\}$, sample random polynomials $r, x, x' \sim \chi$,
 we compute
\[
a = \vec{a}r + x \pmod{q}, \qquad
b = yr + x' + \Big\lfloor \tfrac{q}{2} \Big\rfloor m \pmod{q}.
\]

\medskip
\noindent\textbf{Decryption:}  
For  decryption, we compute  
\[
b - a \cdot s \pmod{qR} 
  = \Big\lfloor \tfrac{q}{2} \Big\rfloor m + er + x' - xs \pmod{q}.
\]  
By applying coefficient-wise rounding, the original message polynomial $m$ can be recovered. The scheme is correct provided that the error term $er + x' - xs$ is bounded by $q/4$.  
This condition allows the error size to be chosen on the order of $\sqrt{q}/2$.  The toughness of the decisional version of Ring-LWE problem with short secrets is used in two steps:  
first, to argue that uniformly random ring elements can replace the public key and second, to demonstrate that the ciphertext component $b$ is indistinguishable from uniform.  This scheme is notably efficient, achieving security that is believed to be of order of $2^{\Omega(n)}$,  
while all operations can be carried out in not more than $n \cdot \mathrm{polylog}(n,q)$ time.  
By selecting $q = \mathrm{poly}(n)$, we obtain a PKE scheme where  key generation,  
encryption, and decryption are all computable in quasilinear time with respect to the security parameter.  Furthermore, Lyubashevsky et al.~\cite{lyubashevsky2010ideal} demonstrated that breaking this scheme 
is at least as hard as solving certain worst-case problems on ideal lattices, which are conjectured  
to require $2^{\Omega(n)}$ time.

\subsection{Sampling Algorithms}

     Let $\vec{a}:= [ \vec{a}_{1}, \vec{a}_{2}, \dots, \vec{a}_{p}] \in R_{q}^{(l+pk)}$ where $\vec{a_{1}} \in R_{q}^{(k)}$ and $\vec{a_{i}} \in R_{q}^{(k)}$ for $i \in \{2,3, \dots p\}$. For $S \subseteq [p] \ 
([p]:=\{ 1, 2, \dots ,p\} ),  S= \{ i_{1}, i_{2}, \dots, i_{j} \} $, we set $\vec{a}_{s}:= [a_{{i}_{1}}, \dots \vec{a}_{{i}_{j}}]$, i.e.  we select the components of $\vec{a}$ according to S, when we treat  $\vec{a}_{{i}_{u}} \in R_{q}^{(l+k)} (u= 1,2 \dots j)$   as a component of $\vec{a}$. Without loss of generality, we let $S= [s]$, for some $s \in [p] $.
    \begin{lemma}[\textup{\cite{lai2014trapdoors}}]
    \label{lemma1}
    The PPT algorithm {\sf ringGenSamplePre} takes a ring element $\vec{a} \in R_{q}^{(l+k)}$, a set $S \subseteq [p]$ and a random basis $B_{s}$ of $\Lambda_{q}^{\perp}(\vec{a}_{s})$, a vector $y \in R_{q}^{l}$ and a positive integer $r \geq \| \tilde{B_{s} }\| .\omega(\sqrt{\log pk})$ and outputs $e \leftarrow {\sf ringGenSamplePre}(\vec{a}, B_{s}, S, y, r)$ that are statistically close to sampling a random error vector from the distribution $D_{\Lambda_{q}^{y}(\vec{a}), r}$.
\end{lemma}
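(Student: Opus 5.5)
The plan mirrors the classical {\sf GenSamplePre} argument of Cash--Hofheinz--Kiltz--Peikert, transported to the ring setting via the coefficient embedding. First I would reduce to the case $S=[s]$, as the statement allows, so that $\vec{a}=[\vec{a}_S \mid \vec{a}_{\bar S}]$, where $\vec{a}_{\bar S}$ collects the remaining components. The target lattice coset is $\Lambda_q^{y}(\vec{a})=\{e : \vec{a}^T e \equiv y \pmod q\}$.

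The algorithm works in two stages. In the first stage it samples the part of $e$ indexed by $\bar S$ directly from the discrete Gaussian $D_{R^{|\bar S|},r}$; this needs no trapdoor, since we are sampling over the full integer lattice $\mathbb{Z}^{n|\bar S|}$. Call this part $e_{\bar S}$ and set $y' = y - \vec{a}_{\bar S}^T e_{\bar S}$. In the second stage it calls the basic ring preimage sampler (the ring analogue of {\sf SamplePre}, i.e.\ {\sf ringSampleD} with basis $B_S$). This outputs $e_S \sim D_{\Lambda_q^{y'}(\vec{a}_S),r}$ up to negligible statistical distance, which is valid because $r\ge\|\tilde B_S\|\,\omega(\sqrt{\log pk})$ exceeds the smoothing parameter of $\Lambda_q^\perp(\vec{a}_S)$. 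The output is $e=(e_S,e_{\bar S})$, and by construction $\vec{a}^Te\equiv y$.

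For correctness of the distribution, I would compute the probability of a fixed $e\in\Lambda_q^y(\vec{a})$:
\[
\Pr[e]=\frac{\rho_r(e_{\bar S})}{\rho_r(\mathbb{Z}^{n|\bar S|})}\cdot\frac{\rho_r(e_S)}{\rho_r(\Lambda_q^{y'}(\vec{a}_S))}.
\]
The numerator multiplies to $\rho_r(e)$. The key point is that $\rho_r(\Lambda_q^{y'}(\vec{a}_S))$ is a coset of $\Lambda_q^\perp(\vec{a}_S)$, so above smoothing it equals $\rho_r(\Lambda_q^\perp(\vec{a}_S))$ up to a factor $(1\pm\epsilon)$, essentially independently of $y'$. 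Hence $\Pr[e]\propto\rho_r(e)$ up to $(1\pm\epsilon)/(1\mp\epsilon)$ multiplicative error, which is exactly $D_{\Lambda_q^y(\vec{a}),r}$ to within negligible statistical distance. The cosets are nonempty provided $\vec{a}_S$ generates $R_q$, which I take as implicit from the existence of $B_S$.

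The step I expect to be the main obstacle is justifying the smoothing bound in the ring setting. One must check two things: that $\|\tilde B_S\|$ for the ring basis, viewed in the coefficient embedding (rotation matrices of $x^n+1$), satisfies $\eta_\epsilon(\Lambda_q^\perp(\vec{a}_S))\le\|\tilde B_S\|\,\omega(\sqrt{\log pk})$, and that the dimension in the $\omega(\sqrt{\log\cdot})$ factor is accounted for properly. I would invoke the Gentleman--Peikert--Vaikuntanathan lemma $\eta_\epsilon(\Lambda)\le\|\tilde B\|\,\omega(\sqrt{\log m})$ applied to the integer lattice of dimension $m=nsk$. Since the ring lattice is simply an integer lattice under the embedding, this applies directly. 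The remaining bookkeeping is routine: polynomial running time, and the union of the two negligible errors.
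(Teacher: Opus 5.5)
The paper gives no proof of this lemma. It is quoted from \cite{lai2014trapdoors} as the ring analogue of the {\sf GenSamplePre} algorithm of Cash, Hofheinz, Kiltz and Peikert, so there is no in-paper argument to compare against. Your argument is the standard one and is sound in substance. You sample $e_{\bar S}$ from a spherical Gaussian, shift the syndrome to $y'$, and preimage-sample $e_S$ with $B_S$. For $r\ge\eta_\epsilon(\Lambda_q^\perp(\vec a_S))$, every coset has Gaussian mass in $[\frac{1-\epsilon}{1+\epsilon},1]\cdot\rho_r(\Lambda_q^\perp(\vec a_S))$, and this pins $\Pr[e]/D_{\Lambda_q^{y}(\vec a),r}(e)$ inside $[\frac{1-\epsilon}{1+\epsilon},\frac{1+\epsilon}{1-\epsilon}]$.

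The other common route is to extend $B_S$ block-triangularly to a basis of $\Lambda_q^\perp(\vec a)$ with the same Gram--Schmidt norm, using the identity on the $\bar S$ coordinates together with a solution $W$ of $\vec a_S^T W\equiv-\vec a_{\bar S}^T$. You then run the ordinary preimage sampler once. Correctness is immediate from the GPV sampler, at the cost of computing $W$ and sampling in the full dimension. Your version is cheaper but needs the explicit coset-mass calculation.

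One justification in your write-up is wrong, although the remedy is only to state a hypothesis. Surjectivity of $e_S\mapsto\vec a_S^T e_S$ onto $R_q$ does not follow from the existence of $B_S$, since every $\Lambda_q^\perp(\vec a_S)$ has a basis. For $\vec a_S=0$ the identity is even a basis of Gram--Schmidt norm $1$, yet $\Lambda_q^{y'}(\vec a_S)=\emptyset$ for all $y'\neq 0$. Your second stage then has nothing to sample, and the factorisation of $\Pr[e]$ no longer describes the output. Surjectivity must be assumed explicitly, as is standard for preimage samplers. It holds by construction when $\vec a_S$ contains a {\sf ringGenTrap} output with invertible tag $h$.

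Three smaller points:
\begin{itemize}
\item $\Lambda_q^\perp(\vec a_S)$ has embedded dimension $n(l+sk)$, not $nsk$.
\item The statement's $\omega(\sqrt{\log pk})$ has to be read as $\omega(\sqrt{\log n(l+pk)})$, and $y\in R_q^{l}$ as $y\in R_q$.
\item $\|\tilde B_S\|$ should be understood as the Gram--Schmidt norm of $B_S$ as a $\mathbb{Z}$-basis in the coefficient embedding.
\end{itemize}
Under that reading, your appeal to $\eta_\epsilon(\Lambda)\le\|\tilde B\|\,\omega(\sqrt{\log m})$ goes through.
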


\subsection*{New Trapdoor Generation} 
We describe the trapdoor generation in a ring lattice. We consider a distribution $\chi^{l \times k}$ over $R^{l \times k}$ and   $h \in R_{q}$ which is a nonzero element. The following lemma gives us a way to generate a trapdoor.  
\begin{lemma}[\textup{\cite{lai2014trapdoors}}]
    There exists an algorithm {\sf ringGenTrap} that takes a vector of ring element $\vec{a}_{0}=(a_{1}, \dots, a_{l})$, a nonzero element  $h \in R_{q}$ and a distribution $ \chi$ as input, and gives a vector   $\vec{a}=(\vec{a_{0}}^{T}, \vec{a_{1}}^{T} )^{T} \in R_{q}^{l+k}$ as output and a trapdoor $\mathbf{R}= (\vec{r}_{1}, \dots, \vec{r}_{k}) \in R^{l \times k}$.
    Also, as long as the distribution of $(\vec{a}_{0}^{T}, -\vec{a}_{0}^{T}\mathbf{R})$ is close to uniform, the distribution of $\vec{a}$ is also close to uniform.
\end{lemma}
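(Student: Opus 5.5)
The plan is to use the MP12-style gadget construction in the ring setting, following Lai et al. Fix a gadget vector $\vec{g}=(1,2,4,\dots,2^{k-1})^T\in R_q^{k}$ with $k=\lceil\log_2 q\rceil$. The lattice $\Lambda_q^{\perp}(\vec{g}^T)$ has a known short basis $T_{\vec g}$ with $\|\tilde T_{\vec g}\|\le\sqrt5$.

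On input $\vec{a}_0\in R_q^{l}$, $h\in R_q\setminus\{0\}$ and $\chi$, the algorithm {\sf ringGenTrap} does two things.
\begin{enumerate}
\item It samples $\mathbf{R}=(\vec r_1,\dots,\vec r_k)\in R^{l\times k}$, taking each ring entry independently from $\chi$.
\item It outputs $\vec a_1 := h\vec g-\mathbf{R}^T\vec a_0\in R_q^{k}$ and $\vec a=(\vec a_0^T,\vec a_1^T)^T$, together with $\mathbf{R}$.
\end{enumerate}
Every step is sampling followed by a matrix–vector product over $R_q$. The running time is therefore polynomial in $n,l,k$, and the output has the stated type.

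The trapdoor property is a one-line check:
\[
\vec a^T\begin{pmatrix}\mathbf{R}\\ I_k\end{pmatrix}=\vec a_0^T\mathbf{R}+\vec a_1^T=h\vec g^T .
\]
This identity is what later lets {\sf ringGenSamplePre} and the delegation algorithms turn a short basis of $\Lambda_q^{\perp}(\vec g^T)$ into a short basis of $\Lambda_q^{\perp}(\vec a^T)$. For the statement at hand, it is enough to record that $\mathbf R$ is a trapdoor with tag $h$.

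For the distributional claim, write $\vec a=(\vec a_0^T,\; -\vec a_0^T\mathbf{R}+h\vec g^T)$. Consider the map
\[
\phi(\vec u_0,\vec u_1)=(\vec u_0,\;\vec u_1+h\vec g^T),
\]
which shifts the second block by a fixed vector. It is a bijection of $R_q^{l}\times R_q^{k}$, so it sends the uniform distribution to itself. Applying $\phi$ to $(\vec a_0^T,-\vec a_0^T\mathbf R)$ gives exactly $\vec a$. Since applying a fixed function cannot increase statistical distance, we obtain
\[
\Delta(\vec a,U)\le\Delta\big((\vec a_0^T,-\vec a_0^T\mathbf{R}),U\big).
\]
Hence $\vec a$ is close to uniform whenever $(\vec a_0^T,-\vec a_0^T\mathbf R)$ is.

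The only place needing care is the shift argument. The point to confirm is that $h$ and $\vec g$ are fixed and independent of $\mathbf R$ and $\vec a_0$, so the shift is deterministic. This holds because $h$ is an input and $\vec g$ is public.

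The hypothesis itself is the real content: $(\vec a_0,-\vec a_0^T\mathbf R)$ being close to uniform would follow from a ring leftover hash lemma or from RLWE. The statement, however, assumes it, so that step is not needed here. No invertibility of $h$ is required for this lemma; it matters only for later inversion and sampling.
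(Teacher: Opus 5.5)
Your proposal is correct and follows essentially the same route as the source. The paper gives no proof of its own and imports this lemma from Lai et al., whose {\sf ringGenTrap} is the same MP12-style gadget construction you describe: sample $\mathbf{R}$ from $\chi^{l\times k}$, set $\vec{a}_1^T = h\vec{g}^T - \vec{a}_0^T\mathbf{R}$, and obtain near-uniformity of $\vec{a}$ because it is the image of $(\vec{a}_0^T, -\vec{a}_0^T\mathbf{R})$ under the fixed shift by $h\vec{g}^T$ (your remark that invertibility of $h$ plays no role here is also accurate).
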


\subsection*{Trapdoor Delegation}
Now, we describe the algorithm for trapdoor generation corresponding to the vector $(\vec{a}^{T}, \vec{a}_{1}^{T})^{T}$ using the trapdoor of $\vec{a}$. For discrete Gaussian sampling over cosets of $\Lambda _{q}^{\perp}(\vec{a}^{T})$ with parameter $r^{'}$, suppose we are given an oracle $\mathcal{O}$.

\begin{lemma}
    The PPT algorithm ${\sf ringDelTrap^{\mathcal{O}}}$, takes the oracle $\mathcal{O}$, a vector $\vec{a}^{'}=(\vec{a}^{T}, \vec{a}_{1}^{T})^{T} \in R_{q}^{m+k}$ and   $h^{'} \in R_{q}$ which is non zero, and parameter $r$ as input and outputs   $\mathbf{R} \in R^{(m+k) \times k}$ which is a trapdoor for    $\vec{a}^{'}$ with tag $h^{'}$~\textup{\cite{lai2014trapdoors}}.
    Furthermore, for the randomisation of lattice basis, Cash et al. gave a technique in~\textup{\cite{cash2012bonsai}} for which we will use the notation {\sf ringRandBasis}. It requires a basis of the lattice and the gaussian parameter with some restrictions, and it generates a different basis for the same lattice such that the distributions of both bases are independent of each other.  
\end{lemma}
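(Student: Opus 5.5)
The idea is to build the delegated trapdoor for $\vec{a}'=(\vec{a}^{T},\vec{a}_1^{T})^{T}$ column by column, following the Micciancio--Peikert style delegation as adapted to rings in \cite{lai2014trapdoors}. I use the gadget vector $\vec{g}=(1,2,\dots,2^{k-1})\in R_q^{k}$ with $k=\lceil\log q\rceil$. A trapdoor for $\vec{a}'$ with tag $h'$ means a matrix $\mathbf{R}\in R^{(m+k)\times k}$ with short entries such that the extended vector $[\vec{a}'^{T}\mid h'\vec{g}-\vec{a}'^{T}\mathbf{R}]$ has the structure output by {\sf ringGenTrap}. To delegate with the second block equal to $\vec{a}_1$, it suffices to find short $\mathbf{R}\in R^{m\times k}$ satisfying
\[
\vec{a}^{T}\mathbf{R}=h'\vec{g}-\vec{a}_1^{T}\pmod q,
\]
padded with zero rows so that it lies in $R^{(m+k)\times k}$.

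\textbf{Steps.}
First, for each $j=1,\dots,k$, set the target $u_j:=h'2^{j-1}-a_{1,j}\in R_q$.
Second, query the oracle $\mathcal{O}$ on the coset $\Lambda_{q,u_j}(\vec{a}^{T})$ with parameter $r'$, and receive $\vec{r}_j\sim D_{\Lambda_{q,u_j}(\vec{a}^T),r'}$. The oracle may be realised by {\sf ringGenSamplePre} from Lemma~\ref{lemma1} when a basis is available.
Third, set $\mathbf{R}=(\vec{r}_1,\dots,\vec{r}_k)$. By construction $\vec{a}^{T}\vec{r}_j=u_j$, so the defining relation holds for every column, and $\mathbf{R}$ is a trapdoor for $\vec{a}'$ with tag $h'$.
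Fourth, bound the quality of $\mathbf{R}$. By the standard Gaussian tail bound, each $\|\vec{r}_j\|\le r'\sqrt{n m}$ except with negligible probability. The largest singular value satisfies $s_1(\mathbf{R})=O(r'\sqrt{n}(\sqrt{m}+\sqrt{k}))$, so $\mathbf{R}$ can be used again as input (with parameter $r$) to {\sf ringGenSamplePre} or to a further delegation. Finally, the algorithm runs in polynomial time because it makes only $k$ oracle calls and performs ring arithmetic.

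\textbf{Main obstacle.} The subtle point is that the output reveals nothing about the original trapdoor of $\vec{a}$. This is needed later for the forward-security simulation, where the simulator holds no trapdoor for $\vec{a}$ itself. I would prove it by noting that the distribution of $\mathbf{R}$ depends only on $(\vec{a},\vec{a}_1,h',r')$ and on the oracle's output distribution, which is the ideal discrete Gaussian. Provided $r'$ exceeds the smoothing parameter of $\Lambda_q^{\perp}(\vec{a}^T)$, the oracle output is statistically independent of whichever basis implements it. I expect the delicate part to be verifying that $r'\ge\eta_\epsilon(\Lambda_q^\perp(\vec a^T))$ holds for the parameters used; I would first appeal to the bound $\eta_\epsilon\le\|\tilde{B}\|\,\omega(\sqrt{\log pk})$ implicit in Lemma~\ref{lemma1}. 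Any further independence across delegation levels can then be obtained by applying {\sf ringRandBasis} \cite{cash2012bonsai}.
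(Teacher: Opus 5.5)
Your construction, with one oracle call per column on the coset with target $h'2^{j-1}-a_{1,j}$, is exactly the Micciancio--Peikert delegation as ring-adapted in \cite{lai2014trapdoors}. The paper gives no proof beyond citing that source (and \cite{cash2012bonsai} for {\sf ringRandBasis}, which you also just invoke), so your argument is correct and follows the same route, provided you fix your opening definition and the zero-padding: as written that definition is satisfied by every short $\mathbf{R}$ (it describes {\sf ringGenTrap}, not delegation), whereas the relation your steps actually establish, $\vec{a}_1^{T}=h'\vec{g}^{T}-\vec{a}^{T}\mathbf{R}$ with $\mathbf{R}\in R^{m\times k}$, is the right notion of a trapdoor for $\vec{a}'$ with tag $h'$, so the $(m+k)\times k$ in the statement should be read as $m\times k$.
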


\subsection{Binary Tree  and Trapdoor Assignment in Lattice-Based Schemes}
Lets define a  \( d \) depth binary tree  whose any  node at level \( i \)  is    represented as
\[
v(i, j_i) = (v_{0,0}, v_{1,j_1}, \dots, v_{i,j_i}) \quad \text {where }  0 \leq i \leq d - 1, \quad 0 \leq j_i \leq 2^i - 1. 
\]
The path vector  uniquely represents a path from  node \( v_{0,0} \) to node \( v_{i,j_i} \). For each level \( i \in \{0, \dots, d - 1\} \), nodes are indexed starting with 0 from left to right~\cite{jin2024lattice}. Given any node \( v_{i,j_i} \), we define a corresponding vector of ring elements
\[
\vec{a}_{w^{(i,j_i)}} = [\vec{a} , \vec{a}_{v_1,j_1}, \, \dots \, , \vec{a}_{v_i,j_i}] \in R_q^{(l + (i+1)k)},
\]
where \( \vec{a} \in R_q^{l + k} \) is a ring element generated by the algorithm \texttt{ringGenTrap}, and \( R_{\vec{a}} \in R_q^{l \times k} \) is its associated trapdoor. Each \( \vec{a}_{v_{i,j_i}} \in R_q^{k} \) is chosen randomly for all \( 0 \leq i \leq d - 1 \) and \( 0 \leq j_i \leq 2^i - 1 \). Each node \( v_{i,j_i} \) has a corresponding trapdoor \( R_{\vec{a}_{v(i,j_i)}} \), which can be computed ($r$ is a parameter) via
\[
R_{\vec{a}_{v(i,j_i)} }\leftarrow \texttt{ringExtBasis}(R_{\vec{a}}, \vec{a}_{v(i,j_i)}, h^{'}, r).
\]

\noindent {\it Minimal Cover}. For a given   leaf node \( v_{d-1,t} \) (where \( 0 \leq t \leq 2^d - 1 \)), we define the \emph{minimal cover} as the smallest cardinality set of nodes that includes at least one ancestor of each leaf in the set
$
\{v_{d-1,t}, v_{d-1,t+1}, \dots, v_{d-1,N-1}\}
$
and excludes all ancestors of leaves in
$
\{v_{d-1,0}, v_{d-1,1}, \dots, v_{d-1,t-1}\},
$
and it is denoted by $\text{MinCov}(v_{d-1,t})$.
At epoch \( t \), for user identity \( \text{id} \), we also define
\[
\mathcal{T}_{\text{id}, v_{l-1,t}} = \{ R_{\vec{a}_{v(i,j_i)}} \mid v(i,j_i) \in \text{MinCov}(v_{d-1,t}) \}.
\]

\noindent {\it Example:} 
Now, we consider a binary tree of depth $d=4$ and calculate the minimal cover of all the nodes of the last level.
\begin{figure}
    \begin{center}
\begin{tikzpicture}[
  scale=0.85, 
  level 1/.style = {sibling distance=6cm},
  level 2/.style = {sibling distance=3.2cm},
  level 3/.style = {sibling distance=1.5cm},
  level 4/.style = {sibling distance=.8cm},
  edge from parent/.style={draw,-latex},
  every node/.style={draw, circle, inner sep=1pt},
]

\node {$v_{0,0}$}
  child { node {$v_{1,0}$}
    child { node {$v_{2,0}$}
      child { node {$v_{3,0}$} }
      child { node {$v_{3,1}$} }
    }
    child { node {$v_{2,1}$}
      child { node {$v_{3,2}$} }
      child { node {$v_{3,3}$} }
    }
  }
  child { node {$v_{1,1}$}
    child { node {$v_{2,2}$}
      child { node {$v_{3,4}$} }
      child { node {$v_{3,5}$} }
    }
    child { node {$v_{2,3}$}
      child { node {$v_{3,6}$} }
      child { node {$v_{3,7}$} }
    }
  };
\end{tikzpicture}
\end{center}
\caption{Toy example of binary tree and trapdoor assignment}
\end{figure}
\[
\begin{aligned}
&\text{MinCov}(v_{3,0}) = \{v_{0,0}\}, &\text{MinCov}(v_{3,1}) = \{v_{1,1}, v_{2,1}, v_{3,1}\}, \\
&\text{MinCov}(v_{3,2}) = \{v_{1,1}, v_{2,1}\},  &\text{MinCov}(v_{3,3}) = \{v_{1,1}, v_{3,3}\}, \\
&\text{MinCov}(v_{3,4}) = \{v_{1,1}\}, &\text{MinCov}(v_{3,5}) = \{v_{2,3}, v_{3,5}\}, \\
&\text{MinCov}(v_{3,6}) = \{v_{2,3}\}, &\text{MinCov}(v_{3,7}) = \{v_{3,7}\}.
\end{aligned}
\]

\[
\begin{aligned}
&\mathcal{T}_{\text{id}, v_{3,0}} = \{ R_{\vec{a}\}}, \\
&\mathcal{T}_{\text{id}, v_{3,1}} = \{R_{\vec{a}_{v(3,1)}}, R_{\vec{a}_{v(2,1)}}, R_{\vec{a}_{v(1,1)}} \},
\end{aligned}
\]
where
\[
\begin{aligned}
&\vec{a}_{v(3,1)} = [\vec{a} \, , \, \vec{a}_{v_{1},0} \, , \, \vec{a}_{v_{2},0} \, , \, \vec{a}_{v_{3},1}],\\
&\vec{a}_{v(2,1)} = [\vec{a} \, , \, \vec{a}_{v_{1},0} \, , \, \vec{a}_{v_{2},1}],\\
&\vec{a}_{v(1,1)} = [\vec{a} \, , \, \vec{a}_{v_{1},1}].
\end{aligned}
\]





\section{Proposed Construction of {\sf RFS-IBE}}

Our proposed scheme relies on the RLWE assumption, ensuring post-quantum security. It incorporates the minimal cover mechanism within a binary tree structure and employs an IBE framework to enable periodic secret key updates for individual IoT devices, thereby achieving forward security. In addition to offering selective-ID and CPA security, the proposed protocol also achieves a reduction in ciphertext and key sizes.
\subsection{Our Construction}
\begin{itemize}

    \item $(params,msk)\leftarrow${\sf RFS-IBE.Setup}($\lambda, N$). Given $\lambda$ and $N$ as the security parameter the total epoch of the system, the algorithm proceeds as follows:
 
\begin{itemize}
    \item Run ${\sf ringTrapGen(\vec{a_{o}}, h)}$ which  takes  a vector $\vec{a_{o}}= (a_{1}, a_{2} \dots a_{l})^{T} \in   R_{q}^{l}$,   $h \in R_{q}$ which is a nonzero element and a distribution $\chi^{l \times k}$ over $R^{l \times k}$ and gives a pair $\vec{a}=(\vec{a}_{0}^{T}, \vec{a}_{1}^{T} )^{T} \in R_{q}^{l + k}$  and a trapdoor $R _{\vec{a}} \in R^{l \times k}$;
    \item The system is defined over $(R_{\{0,1\}}, \mathcal{C})$ where the plaintext space  $R_{\{0,1\}}$  is the collection of polynomials with  coefficient from \{0,1\}   and $\mathcal{C}$ is the ciphertext space;
    \item Outputs the $params:= (R, l, q, k, d, r, \vec{a}, H, G)$ and the $msk:=R _{\vec{a}}$.\\
    
\end{itemize}

\item $(sk_{id,0})\leftarrow${\sf RFS-IBE.KeyGen}($params, id, msk$). Taking the input   {\it params}, a user's $id$ and  the master secret key and it gives the user's initial key $sk_{id,o}$ as output. For $epoch$ $ t \in \{0, 1, 2 \dots N-1 \}$ and  an arbitrary $id$  and with two hash functions $H:\{0,1\}^{*} \xrightarrow{}R_{q}^{ k}$ and
$G: \{ 0, 1\}^{*}\xrightarrow{}R_{q}$, we define

\begin{itemize}
    \item $\vec{a}_{id,t}= [\vec{a}, \vec{a}_{id, v_{1}, j_{1}}, \dots \vec{a}_{id, v_{d-1, t+1} }] \in R_{q}^{(l+dk)}$ where $H(id, v_{i, j_{i}})=\vec{a}_{id, v_{i}, j_{i}} \in R_{q}^{k}$;
    
    \item $y_{id, t}=G(id, t) \in R_{q}$.\\
     
\end{itemize}

First, we consider the elements in the set    MinCov($v_{d-1,1}$), then corresponding to each nodes in this set, trapdoor is constructed and  apply the \texttt{ringRandBasis} algorithm. Since this algorithm is invoked during each secret key generation, we assume for simplicity that its parameters are adaptively chosen, and we omit their explicit specification. Let $\mathcal{T}_{\text{id},0}$ denote the resulting randomized set of trapdoors.

\medskip

Secondly, we compute
\[
\mathbf{e}_{\text{id},0} \leftarrow \texttt{ringGenSamplePre}(\vec{a}_{\text{id},0}, \mathcal{T}_{\text{id},0}, S, \mathbf{y}_{\text{id},0}, r), \mathbf{e}_{\text{id},0} 
\in R_{q}^{l+dk}.
\]

\medskip

The secret key for the identity-time pair $(\text{id}, 0)$ is then defined as:
\[
\text{sk}_{\text{id},0} = (\mathcal{T}_{\text{id},0}, \mathbf{e}_{\text{id},0}).
\]

\item $({sk}_{\text{id},t+1})\leftarrow${\sf RFS-IBE.Update}($params, t, id$). The algorithm takes $params$, an intervel $t$, and $id$ as input. To update  $\text{sk}_{\text{id},t}$, we have to update the matrix $\vec{a}_{\text{id},t}$, vector  $\mathbf{y}_{\text{id},t}$ and trapdoor set $\mathcal{T}_{\text{id},t}$ (which is the collection of trapdoor for the elements in $\text{MinCov}(v_{d-1,t+1})$).

\begin{itemize}
    \item First we update $\vec{a}_{\text{id},t}$ which depends on a new path obtained in the binary tree. Now, using the hash function $H$, we obtain new matrices and finally produces $\vec{a}_{\text{id},t+1}$.

    \item The $\mathbf{y}_{\text{id},t}$ is updated using $G$.

    \item To update $\mathcal{T}_{\text{id},t}$, the elements of the set  MinCov($v_{d-1,t+2}$) are considered. The trapdoors corresponding to each node in this set is generated and the keys corresponding to nodes MinCov($v_{d-1,t+1}$)\textbackslash MinCov($v_{d-1,t+2}$)  are deleted. {\sf ringRandBasis} is invoked to randomize the trapdoors. 
    
\[
\mathbf{e}_{\text{id},t+1} \leftarrow \texttt{GenSamplePre}(\vec{a}_{\text{id},t+1}, \mathcal{T}_{\text{id},t+1}, S, \mathbf{y}_{\text{id},t+1}, r),
\]
where $\mathbf{e}_{\text{id},t+1}$ is distributed over $D_{\mathcal{T}_{\text{id},t+1}(\vec{a}_{\text{id},t+1}), r}$.
\end{itemize} 
Finally the algorithm outputs 
\[
\text{sk}_{\text{id},t+1} = (\mathcal{T}_{\text{id},t+1}, \mathbf{e}_{\text{id},t+1}).
\]

\item $(p,c)\leftarrow${\sf RFS-IBE.Encrypt}($params, b $). On input \texttt{params}, an element    $b \in R_{\{0,1\}}$,  an element $\mathbf{s}  \in R_{q}$ is randomly chosen and compute 
\[
\mathbf{p} = (\vec{a}_{\text{id},t} \mathbf{s} + \mathbf{x}) \in  {R}_q^{(l+dk)},
  \mathbf{c} = (\mathbf{y}_{\text{id},t}^T \mathbf{s} + x+ \mathbf{b} \cdot \left\lfloor \frac{q}{2} \right\rfloor) \in   R_{q},
\]
where $\mathbf{x} \leftarrow \chi^{(l+dk)}$ and $x \leftarrow \chi$.
The algorithm outputs  $(p, c)$.

\item $(b)\leftarrow${\sf RFS-IBE.Decrypt}($(p,c), {sk}_{\text{id},t}$ ). Decryption uses $\text{sk}_{\text{id},t}$. It computes 
\[
b' = c - \mathbf{e}_{\text{id},t}^T \mathbf{p} \in R_{q},
\]

and for each coefficient of ring element it gives 0 if the coefficient is closer to 0 than  $\left\lfloor \frac{q}{2} \right\rfloor$ (mod $q$); otherwise, it outputs 1 and we obtain $b$.  

\end{itemize} 

We now show correctness.
 From Lemma \ref{lemma1}, we can see that a set S can be found easily that satisfies the conditions of the algorithm {\sf ringGenSamplePre}. Since all the choices satisfy the corresponding conditions, all the algorithm of the system can operate correctly. 
 
 It is not difficult to get a set S  satisfying the conditions of  {\sf ringGenSamplePre}. 
 Since all selected parameters satisfy the imposed conditions,
  the algorithms of the system can operate correctly. Further, we can observe 
that the ciphertext is given by $(p,c)=(\vec{a}_{\text{id},t} \mathbf{s} + \mathbf{x}, \mathbf{y}_{\text{id},t}^T \mathbf{s} + x +\mathbf{b} \cdot \left\lfloor \frac{q}{2} \right\rfloor)$. On computing $b^{'}$ we finally obtain the expression $(x-\mathbf{e}^{T}_{id,t}\mathbf{x}  + b\cdot \left\lfloor \frac{q}{2} \right\rfloor)$. Looking at $b^{'}$  coefficientwise, suppose the coefficient of $r^{th}$ degree    of $b$ is 0  then the $r^{th}$ component of  $b^{'}$ equals $r^{th}$ component of  $(x-\mathbf{e}^{T}_{id,t}\mathbf{x})$, which is close to  0 relative to $\left\lfloor \frac{q}{2} \right\rfloor$ and when the $r^{th}$ component of $b$ is 1  then the $r^{th}$ component of  $b^{'}= r^{th}$ component of  $(x-\mathbf{e}^{T}_{id,t}\mathbf{x}  + b\cdot \left\lfloor \frac{q}{2} \right\rfloor)$, which is close $\frac{q}{2}$  relative to  0 hence we obtain b.\\
\subsection{Application to Internet of Things}



Apart from the secure transmission of sensitive data in IoT devices, our scheme also supports secure firmware updates, particularly in industrial control systems. It is often found that  unauthorized users can also access the device within IoT~\cite{mohanty2023quantum}. The scheme deals with the problem of unauthorized access. Thus, the proposed scheme can be implemented in different fields like banking, defense, medical,  etc., where it is necessary to secure the crucial and sensitive data transmission. Let us understand the problem associated with the gateways and  IoT devices in the hospitals and health centers, which include wearable and on-body devices like fitness bands, ECG monitors, blood pressure monitors, health monitoring patches, IoT-enabled ambulances, etc., deployed in hospitals. They deal with the sensitive data of the patient. IoT devices are vulnerable to side channel attacks; also, in hospitals, they are often kept in a physically exposed environment, which increases the risk of unauthorized access. So these conditions favor corrupting the device and extracting the secret key. This may result in the exposure of the medical records of the patient, violating the privacy and personal dignity of the patient. It may further lead to discrimination in employment and insurance, and targeted scams. Based on some medical records, the patient may face blackmailing threats. If patient data is manipulated, people lose trust in the hospital, which lowers its reputation and popularity. 
\par

 \begin{figure}
     \centering
     \includegraphics[width=1\linewidth]{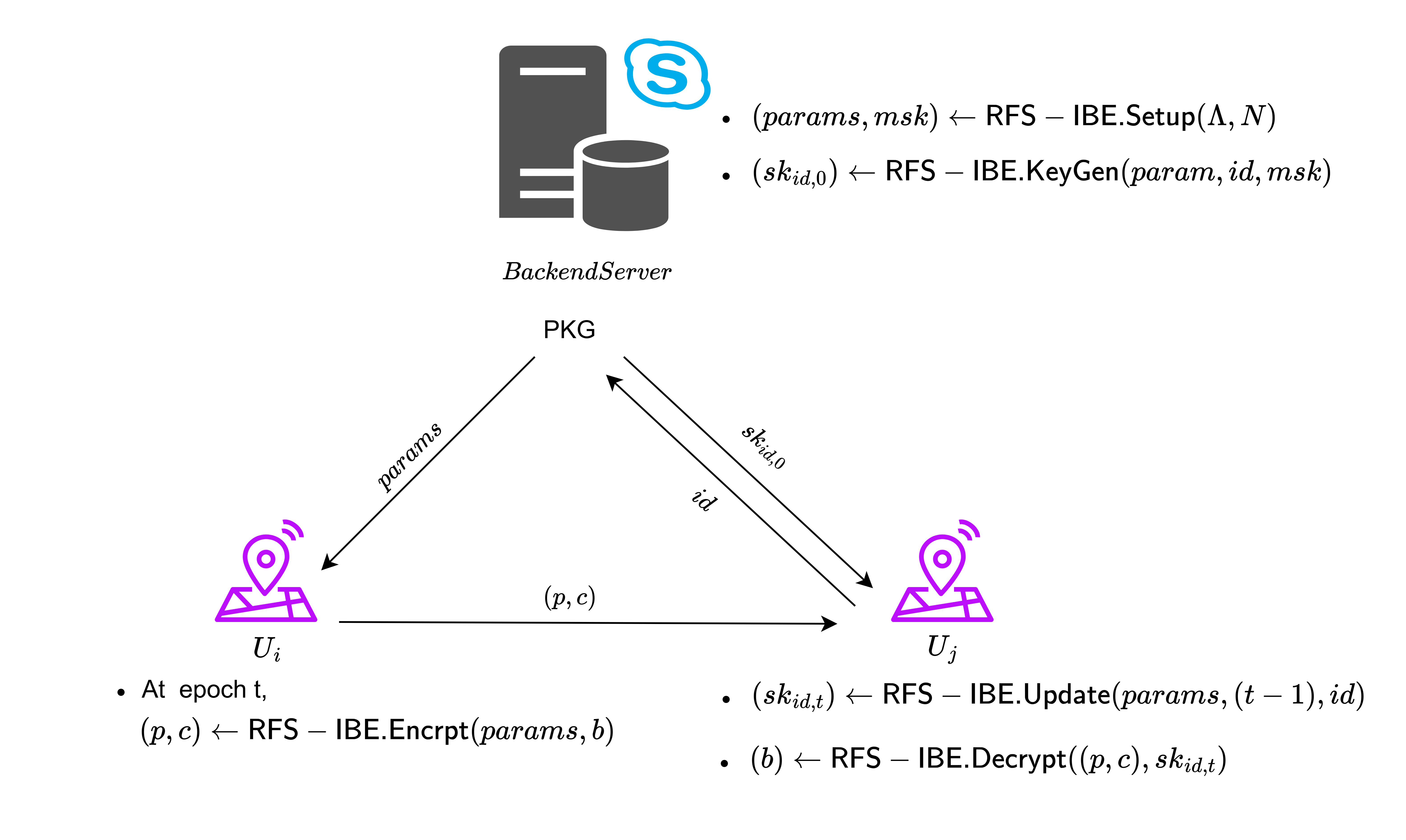}
     \caption{Schematic overview of the proposed construction}
     \label{fig1}
 \end{figure}
Now, we will describe how our proposed construction can be used in hospitals to resolve these issues. We assign the role of  Private Key Generator (PKG) to the backend server of the hospital. These gateways and devices will act like users in an Identity-Based Encryption (IBE) denoted by $U_{1}, U_{2}, \dots, U_{w}$ having their corresponding unique identities. The backend server will start the  {\sf RFS-IBE.Setup}, which will generate the params $(R, l, q, k, d, r, a, H, G)$ and the master secret key $R_{a}$. After that, all the deployed gateways and IoT devices receive their corresponding initial secret key $sk_{id,0}$, which is generated by the algorithm {\sf RFS-IBE.KeyGen}. Each user $U_{i}$ uses $params$, its unique $id$  and the current epoch $t$ to get the updated secret key  $sk_{id,t+1}$, which will be active only for $(t+1)^{th}$ epoch (to facilitate these updates, we have used the mechanism of minimal cover in the context of a binary tree in the core of our construction). Now if $U_{i}$ wants to send the monitored  ECG of a patient denoted by $b$ to $U_{j}$ during the $t^{th}$ epoch, it will run the algorithm {\sf RFS-IBE.Encrypt} and the ciphertext $(p,c)$ is sent to $U_{j}$. Upon receiving the ciphertext, $U_{j}$ uses its secret key $sk_{id}$  for the decryption (refer to Figure~\ref{fig1}). Finally, all the health records of the patient are sent to the backend server via gateways for further analysis and storage.

\section{Security Analysis}

\begin{theorem}
 Assume Theorem~$1$ holds. The RLWE-based PKE scheme based on RLWE, proposed in Subsection~\textup{\ref{sec.2.6}} is secure against indistinguishable chosen plaintext attack. For a shared ring element $\vec{a}$, 
and two hash functions $G$ and $H$, then $fs-ID-BTE_{RLWE}$ is fs-sID-CPA secure in the random oracle model. Further, for total number of epoch $N$ we have, 
\[
\mathsf{Adv}[\mathcal{A}_{BTE}, \text{fs-ID-BTE}_{RLWE}] \leq \frac{1}{N} \cdot \mathsf{Adv}[\mathcal {B}_{{dual}_{RLWE}}, \text{PKE}] + \mathsf{negl}(n),
\]
where  $N$ is the total number of epochs  in $fs-ID-BTE_{RLWE}$.
\end{theorem}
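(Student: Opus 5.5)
We reduce to the dual Regev-type RLWE scheme of Subsection~\ref{sec.2.6}. Given an adversary $\mathcal{A}_{BTE}$ against the fs-sID-CPA security of the scheme, we build $\mathcal{B}_{dual_{RLWE}}$ against that PKE scheme. $\mathcal{B}$ embeds its PKE challenge into the random-oracle answers for the committed identity $id^{\ast}$ at one guessed epoch. Security of the PKE scheme is the standard argument of Subsection~\ref{sec.2.6}: under Theorem~1, first the public key and then the ciphertext component are replaced by uniform ring elements. So the whole task is simulating $\mathcal{A}$'s view without $msk$.

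\textbf{Setup.} $\mathcal{B}$ receives $id^{\ast}$ and a PKE public key. Following the multi-sample dual formulation, this consists of a vector $\vec{a}^{\ast}\in R_q^{l+dk}$ and a syndrome $y^{\ast}$, together with challenge samples $(\mathbf{p}^{\ast},c^{\ast})$. $\mathcal{B}$ then guesses the target epoch $t'\in\{0,\dots,N-1\}$ uniformly; this guess is the source of the factor $1/N$.

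$\mathcal{B}$ parses $\vec{a}^{\ast}=[\vec{a},\vec{a}_1,\dots,\vec{a}_{d-1}]$ and publishes $\vec{a}$ as the shared ring vector. No trapdoor for $\vec{a}$ is known.

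\textbf{Programming the oracle $H$.} On the root-to-leaf path of $v_{d-1,t'}$, $\mathcal{B}$ sets $H(id^{\ast},v_{i,j_i})=\vec{a}_i$. For every node $v$ not on that path whose parent lies on the path (the siblings), $\mathcal{B}$ instead chooses $H(id^{\ast},v)$ via {\sf ringGenTrap}, relative to the already fixed prefix. This yields a trapdoor for the vector ending at $v$, and by the lemma on {\sf ringGenTrap} its distribution is statistically close to uniform. Since $\text{MinCov}(v_{d-1,t})$ for $t>t'$ consists only of such sibling-subtrees lying to the right of the path, every trapdoor in $\mathcal{T}_{id^{\ast},t}$ for $t>t'$ can be derived from these via {\sf ringDelTrap}/{\sf ringExtBasis}. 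The trapdoors are then re-randomized with {\sf ringRandBasis}, so the resulting key distribution matches the real one.

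For identities $id\neq id^{\ast}$, $\mathcal{B}$ generates all of $H(id,\cdot)$ with embedded trapdoors at the first level, so every key for such identities is answerable.

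\textbf{Programming the oracle $G$.} For $(id,t)\neq(id^{\ast},t')$, $\mathcal{B}$ first samples $\mathbf{e}\leftarrow D_{R^{l+dk},r}$ and sets $G(id,t)=\vec{a}_{id,t}^T\mathbf{e}$. This is statistically close to uniform by the standard preimage-sampling regularity, and the stored $\mathbf{e}$ has the same distribution as the output of {\sf ringGenSamplePre} (Lemma~\ref{lemma1}). Finally, $\mathcal{B}$ sets $G(id^{\ast},t')=y^{\ast}$.

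\textbf{Queries and challenge.} Extraction queries $(id^{\ast},t)$ with $t>t'$ are answered using the sibling trapdoors together with the stored $\mathbf{e}$. Queries with $t\le t^{\ast}$ are forbidden.

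If $\mathcal{A}$'s challenge epoch satisfies $t^{\ast}\neq t'$, $\mathcal{B}$ aborts and outputs a random bit. Otherwise it forwards $M_0,M_1$, receives the PKE challenge, and returns it as $(\mathbf{p},c)$. The challenge has exactly the form $(\vec{a}_{id^{\ast},t'}\mathbf{s}+\mathbf{x},\,y^{\ast}\mathbf{s}+x+b\lfloor q/2\rfloor)$.

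Conditioned on no abort, which happens with probability $1/N$ independently of $\mathcal{A}$'s view, $\mathcal{B}$ wins whenever $\mathcal{A}$ does. This gives the stated bound, with the $\mathsf{negl}(n)$ term collecting the statistical distances from {\sf ringGenTrap}, {\sf ringRandBasis} and the preimage sampling.

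\textbf{Main obstacle.} The delicate step is checking that the sibling-trapdoor embedding covers exactly $\text{MinCov}(v_{d-1,t})$ for all $t>t'$ while never needing a trapdoor for any prefix of the challenge path. This is the combinatorial property that $\text{MinCov}(v_{d-1,t})$ contains no ancestor of $v_{d-1,t'}$ when $t'<t$. I would prove it by induction on the depth, comparing binary expansions of $t$ and $t'$. A second point is that, because of the selective-time guess, the key updates and Gaussian parameters $r$ must be chosen large enough that keys delegated from sibling trapdoors are indistinguishable from keys derived from the master trapdoor.
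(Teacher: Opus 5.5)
Your reduction matches the paper's. You guess the challenge epoch and publish the first $l+k$ blocks of the dual-PKE key as $\vec a$. You embed the remaining blocks into $H$ along $id^{\ast}$'s path and $y^{\ast}$ into $G$ at the guessed epoch. All other trapdoors come from {\sf ringGenTrap} outputs via {\sf ringExtBasis}/{\sf ringRandBasis}, and you abort on a wrong guess. Your placement of trapdoors exactly at the siblings of the challenge path, and your GPV-style programming of $G$ via pre-sampled $\mathbf e$, are cleaner than the paper's version of the same step.

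The genuine problem is your last step, ``This gives the stated bound.'' It does not. If $\mathcal{B}$ aborts with probability $1-1/N$, independently of $\mathcal{A}$'s view, and outputs a random bit on abort, then
\[
\Pr[\mathcal{B}\text{ wins}]-\tfrac12=\tfrac1N\big(\Pr[\mathcal{A}\text{ wins}]-\tfrac12\big)\pm\mathsf{negl}(n),
\]
so $\mathsf{Adv}[\mathcal{A}_{BTE}]\le N\cdot\mathsf{Adv}[\mathcal{B}_{dual_{RLWE}},\text{PKE}]+N\cdot\mathsf{negl}(n)$. The factor sits on the opposite side from the displayed inequality: a guessing reduction loses a factor $N$ and cannot gain one. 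The paper's proof has the same flaw. It asserts the abort probability and equal conditional advantage but never derives the inequality. What your argument actually establishes is the bound with $N$ in place of $1/N$. From that, fs-sID-CPA security follows only when $N=\mathrm{poly}(n)$, and you should state this assumption.

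Two smaller fixes are needed.

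First, keep the paper's indexing, where epoch $t$ corresponds to leaf $v_{d-1,t+1}$. {\sf KeyGen} uses $\text{MinCov}(v_{d-1,1})$, and the proof embeds the path to $v_{d-1,i^{\ast}+1}$. If you identify epoch $t$ with leaf $v_{d-1,t}$, the epoch-$0$ key of every identity is $\text{MinCov}(v_{d-1,0})=\{v_{0,0}\}$, i.e.\ $R_{\vec a}$ itself. Then your claim that first-level trapdoors answer all queries for $id\neq id^{\ast}$ fails.

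Second, $\mathcal{B}$ must explicitly abort on a Phase~1 query $(id^{\ast},t)$ with $t\le t'$, as the paper does, since $t^{\ast}$ is not yet known. Such a query forces $t^{\ast}<t\le t'$. This abort is therefore contained in the event $t^{\ast}\neq t'$ and does not change the $1/N$.

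Finally, your ``main obstacle'' needs no induction. By definition, $\text{MinCov}(v_{d-1,t})$ excludes every ancestor of a leaf of smaller index.
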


\begin{proof}
Assume that the $fs-sID-CPA$ security of the $fs-ID-BTE_{RLWE}$ system is attacked by
an adversary $\mathcal{A}_{{BTE}_{RLWE}}$    Before the attack, 
$\mathcal{A}_{{BTE}_{RLWE}}$ first submits a target $\mathit{id}^*$ for the challenge phase. A PPT algorithm $\mathcal{B}_{{dual}_{RLWE}}$ is constructed to simulate the challenger for $\mathcal{A}_{BTE}$ and attacks the RLWE-based  PKE cryptosystem. 
$\mathcal{B}_{{dual}_{RLWE}}$ first chooses $i^* \in \{0, 1, \dots, N-1\}$ uniformly at random as a guess for the challenge interval $j$.
$\mathcal{B}_{{dual}_{RLWE}}$ is 
given the main parameters of the PKE cryptosystem based on RLWE, i.e. $\vec{a'} \in  R^{(l+dk)}$ and $y^{*} \in R_{q} $. 
Parse 
\[
\vec{a'}  = [\vec{a},   \vec{a'}_{1,j_1}, \vec{a'}_{2,j_2}  \dots  \vec{a'}_{d-1, i^*+1}],
\]
where $\vec{a} \in R_{q}^{l+k}$ and $\vec{a'}_{i,j_i} \in  R_{q}^{k}$ for $i=1,2, \dots, d-1$.

For the simulation of the challenger against adversary $\mathcal{A}_{{BTE}_{RLWE}}, \mathcal{B}_{{dual}_{RLWE}}$ sets one of the params to be $\vec{a} \in R_{q}^{l+k}$ and then sends the parameters to be adversary $\mathcal{A}_{{BTE}_{RLWE}}$. $\mathcal{A}_{{BTE}_{RLWE}}$ may query $G(\cdot)$ and $H(\cdot)$. Let $Q_{H}$ and $Q_{G}$ be some poly-size constants. We assume that:
\begin{itemize}
    \item Adversary $\mathcal{A}_{{BTE}_{RLWE}}$ makes $Q_{H}$ different $H(\cdot)$ queries.
    \item Adversary $\mathcal{A}_{{BTE}_{RLWE}}$ makes $Q_{G}$ different $G(\cdot)$ queries.
    \item The random oracle query is assumed to be completed when $\mathcal{A}_{{BTE}_{RLWE}}$ asks for a secret key or issue a challenge.
    
\end{itemize}
Next, the answers of all the queries issued by adversary $\mathcal{A}_{{BTE}_{RLwE}}$ need to be simulated by $\mathcal{B}_{{dual}_{RLwE}}$, which will  keep the  simulated values in subsequent queries  lists $\mathcal{H}$ and $\mathcal{G}$.

It answers queries as follows:
\paragraph{Queries to $H(\cdot)$:}
If $\mathit{id} = \mathit{id}^*$ and $v_{i,j_i}$ lies on the path from node $v_{0,0}$ to $v_{\ell-1, i^*+1}$, return $\vec{a}'_{i,j_i}$.  
Otherwise, run $\mathsf{ringTrapGen}$ to generate $\vec{a}_{\mathit{id}, v_{i,j_i}}$ and its trapdoor $R_{\vec{a}_{id, v_{i,j_{i}}}} \in R_{q}^{l+k}$, store them in list $\mathcal{H}$, 
and return $\vec{a}_{\mathit{id}, v_{i,j_i}}$.

\paragraph{Queries to $G(\cdot)$:}
If $(\mathit{id}, k) = (\mathit{id}^*, i^*)$, return  the  $y^*$.  
Otherwise,  return  $y_{\mathit{id}, k}$,  and store  $((id, k), y_{id,k}$  in $\mathcal{G}$.

\paragraph{Extraction Query 1:} When adversary $\mathcal{A}_{{BTE}_{RLWE}}$ queries secret key for the pair 
 $(\mathit{id}, k) = (\mathit{id}^*, k \leq i^*)$, abort.  
If $(\mathit{id}, k) = (\mathit{id}^*, k > i^*), \mathcal{B}_{{dual}_{RLWE}}$ first establishes the corresponding vector $\vec{a}_{id^{*}, k}$. Note that $\vec{a}_{id^{*}, k}$ and $\vec{a}_{id^{*}, j}$ have at least one different component. So to answer the adversary's random oracle query, the trapdoor of the different components can be simulated by   ${\sf ringTrapGen}$. As for the other trapdoors in $\mathcal{T}_{\text{id}^{*}, k}, \mathcal{B}_{dual} $ looks up the trapdoor already listed in $\mathcal{H}$  and runs $ {\sf ringExtBasis}$  to get them by a simple expansion of lattice basis. Now $\mathcal{B }_{dual} $ can construct the trapdoors set $\mathcal{T}_{\text{id}^{*}, k}$ with the help of  $ {\sf ringRandBasis}$  and then  the vector  $\mathbf{e}_{{id}^{*}, k}$ is computed  to adversary $\mathcal{A}_{{BTE}_{RLWE}}$. If $id \neq id^{*}, \mathcal{B}_{dual}$ generate the set $\mathcal{T}_{\text{id}, k}$, corresponding to $\vec{a}_{id, k} $ by   $ {\sf ringExtBasis}$  and $ {\sf ringRandBasis}$. Then $\mathcal{B}_{dual} $ computes the error vector $e_{id,k}$  by the help of   ${\sf ringGenSamplePre} $ with $\vec{a}_{id, k}$ and corresponding trapdoor $\mathcal{T}_{id, k}$ and $\textbf{y}_{id,k}$ in  $\mathcal{G}$. Finally, $\mathcal{B}_{dual} $ gives $sk_{id,k}= (\mathcal{T}_{id,k}, \textbf{e}_{id, k})$ to adversary $\mathcal{A}_{BTE}$.

\paragraph{Challenge:}
Assume that the adversary  $\mathcal{A}_{{BTE}_{RLWE}}$ declares to end the extraction query 1. Then  $\mathcal{A}_{{BTE}_{RLWE}}$  submits a challenge  $(\mathit{id}^*, j, M_0, M_1)$ and  $(id^{*}, j)$ with $j \leq i^*$ and never queried before.  
If $j = i^*$, $\mathcal{B}_{dual}$ relay $(M_0, M_1)$ to $\mathcal{B}_{dual}$’s challenger, get ciphertext $C^*$, and forward to $\mathcal{A}_{BTE}$.  
If $j \neq i^{*}$, $\mathcal{B}_{dual}$ reports a failure then terminates.

\paragraph{Extraction Query $2$:}
Corresponding to $(id, k)$, $\mathcal{A}_{BTE}$ queries the secret key and same constraints are applied to the pair as in Extraction query 1.

When $\mathcal{A}_{BTE}$ outputs a guess bit, $\mathcal{B}_{dual}$ gives the same bit.  
The probability that the simulation aborts is exactly $1 - \frac{1}{N}$, independent of $\mathcal{A}_{BTE}$’s view.  
Conditioned on  the simulation of $\mathcal{B}_{dual}$ is not abort,  $\mathcal{A}_{BTE}$ is statistically close to the real fs-sID-CPA game,  
and the answers simulated  by $\mathcal{B}_{dual}$ of the random oracle queries and the answers chosen from uniform distribution are indistinguishable.   Finally we claim that $\mathcal{B}_{dual}$'s advantage is exactly the same to $\mathcal{A}_{BTE}$'s if the simulation is not aborted.
\end{proof}

\subsection{Achieving CCA security} An $(l-1)$ level IND-(s)ID-CCA secure HIBE scheme can be achieved efficiently using BCHK transformation \cite{boneh2007chosen} with additional overhead from any $l$-level 
$(l \geq 1)$ IND-(s)ID-CPA secure hierarchical IBE. In fact, we can use FO transformation \cite{fujisaki2013secure} with low additional overhead to achieve IND-CCA security from an IND-CPA secure PKE. Though our construction has not achieved adaptive CCA security yet but it immediately follows to achieve IND-(s)ID-CCA-secure fs-IBE with additional overhead. Adding adaptive security in our construction may cost us an increase in parameter size, which will further result in increasing the complexity of our construction.

\section{Comparison and Efficiency} 

 In this section, first, we describe the communication cost, followed by the storage cost. We will also describe the computational cost of our protocol. Communication cost refers to the amount of total data transmitted during communication.  In our protocol, it mainly includes the ciphertext, updated secret key and the public parameter. The ciphertext is sent to the receiver for decryption which contains the ring elements from $R_{q}^{l + dk}$ and $R_{q}$. The total number of $R_{q}$ ring elements that our ciphertext contains is $(l+k\log (N) +1)$, that is, $n(l+k\log(N) +1)\log q$ bits. Next, for the forward security, the secret is updated periodically, and the updated secret key is sent to the IoT devices by the Key Generation Centre (KGC). The updated secret key contains $( l k \log N +(l+k))$ $R_{q}$, that is, $n( l k \log N +(l+k)) \log q$ bits. Then we have public parameters $= (R, l, q, k, d, r, \vec{a}, H, G)$ in our communication overhead. The integer parameters contribute $\log (lkdr)$ bits, $\vec{a}$ $\in R^{l+k}$ contributes $n(l+k)\log q$ bits and the hash functions H and G output elements which are from $R_{q}^{k}$ and $R_{q}$, respectively. Together they contribute $(kn\log q + n\log q)$ bits. Hence, the total asymptotic size of the public parameters can be given by $|pp|=\mathcal{O}((l+2k+2)n\log q)$. The storage cost is the amount of memory captured by the overheads in the protocol.
   In our protocol, the overheads that contribute to the storage cost are the msk, secret keys of the IoT devices, which keep on updating periodically and public parameters. The msk has a total of $(lk)$ $R_{q}$ elements, and it contributes $n(lk)\log q$ bits in storage. Again the public parameter and secret key contribute $\mathcal{O}((l+2k+2)n\log q)$ and $n((l.k) \log N +(l+k)) \log q$ bits, respectively.  
Now we calculate the computational cost of our protocol, which includes the total number of mathematical and cryptographic operations performed during the execution of different algorithms. The ${\sf RFS-IBE.Encrypt}$ algorithm outputs a ciphertext, which is in a pair $(p, c)$. For $p, (l+dk)$ $ R_{q}$ multiplications and $(l+dk)$ $R_{q}$ additions are required where as for c, two $R_{q}$ multiplications and two $R_{q}$ additions are required; thus, to execute this algorithm, we require a total of $(l+dk+2)$ $R_{q}$ multiplications and $(l+dk+2)$ $R_{q}$ additions. In the decryption algorithm, 
${\sf RFS-IBE.Decrypt}$, (l+dk) $R_{q}$ multiplications and one $R_{q}$ addition is required.

In our protocol, the following trapdoor-based lattice sampling algorithm {\sf ringGenSamplePre, ringRandBasis, ringTrapGen} and {\sf ringExtBasis} have been invoked. In the {\sf RFS-IBE.Setup} phase the algorithm {\sf ringTrapGen} is invoked once to generate the ring element $\vec{a}$ and the corresponding trapdoor $R_{\vec{a}}$. In the {\sf RFS-IBE.KeyGen} phase {\sf ringExtBasis} was invoked a maximum of $\mathcal{O}(d)$ times whereas {\sf ringRandBasis } and {\sf ringGenSamplePre} were invoked once. During the {\sf RFS-IBE.Up-}\\ {\sf date} phase {\sf ringExtBasis} was invoked a maximum of $\mathcal{O}(d)$ times for each $ t \in \{1,2,\dots,N-1\}$ whereas  {\sf ringRandBasis } and {\sf ringGenSamplePre} were invoked $N-1$ times each. The asymptotic invocation count of each algorithms is summarized in Table~\ref{tab:algo-invo}.

\begin{table}[h!]
\centering
\caption{Asymptotic invocation count of algorithms }
\label{tab:algo-invo}
\scalebox{0.85}{
\begin{tabular}{|p{2.5cm}|p{8cm}|p{3cm}|}
\hline
\textbf{Algorithm} & \textbf{In which step it is used and purpose} & \textbf{Number of Invocations} \\\hline
\textsf{ringTrapGen} &
-- In {\sf RFS-IBE.Setup}, to generate the initial pair $(\vec{a}, R_a)$.& $1 $\\
& .
& \\\hline
\textsf{ringExtBasis} &
-- During  {\sf RFS-IBE.KeyGen}/ {\sf RFS-IBE.Update} when expanding trapdoors
in $\text{MinCov}(v_{d-1,t})$. & $ d.O(2^{d})$ \\
& 
&
\\\hline
\textsf{ringRandBasis} &
-- In  {\sf RFS-IBE.KeyGen}, to randomize trapdoors corresponding to all the node of $\text{MinCov}(v_{d-1,1})$ &$1 + (N-1) \approx O(N)$\\
& -- In each {\sf RFS-IBE.Update}, to randomize the refreshed trapdoor set 
$\mathcal{T}_{id,t}$.
&
 \\\hline
\textsf{ringGenSamplePre} &
-- In {\sf RFS-IBE.KeyGen}, to generate $e_{id,0}$ & $1 + (N-1) = N \approx O(N)$\\
&-- In each {\sf RFS-IBE.Update}, to generate $e_{id,t+1}$.
&
 \\\hline
\end{tabular}}\\
\vspace{1mm}
\tiny Here $\vec{a}$ is the ring element and $R_{\vec{a}}$ is the corresponding trapdoor. d represents the depth of the binary tree. N is the total number of epoch. $\text{MinCov}(v_{i,j})$ is the set of minimal cover of node $v_{i,j}$. $\mathcal{T}_{id,t}$ is the set of trapdoor corresponding to all the node of $\textsf{MinCov}(v_{d-1,t+t})$
\end{table}

\noindent

  

\noindent\textbf{Comparison with the protocol of Jin et al.~\cite{jin2024lattice}.}
We now compare our RLWE-based construction with the LWE-based forward-secure IBE of Jin et al.~\cite{jin2024lattice}.   
Since $m = \Omega(n\log q)$, their parameter sizes satisfy:
\[
|mpk|_{\text{Jin}} = n\cdot m = \Omega\!\big(n^2 \log q\big),
\]
\[
|msk|_{\text{Jin}} = m^2 = \Omega\!\big(n^2 (\log q)^2\big),
\]
\[
|sk|_{\text{Jin}} \leq m^2 \log N + m = \Omega\!\big(n^2 (\log q)^2 \log N\big),
\]
\[
|ct|_{\text{Jin}} = m\log N + 1 = \Omega\!\big(n \log q \log N\big).
\]

\noindent
In contrast, our RLWE-based ring construction replaces the full $m$-dimensional LWE matrices with ring elements in dimension $n$.  
As a result, our key and ciphertext sizes scale as:
\[
|mpk|_{\text{ours}} = n(l+k), \qquad
|msk|_{\text{ours}} = nlk,
\]
\[
|sk|_{\text{ours}} \leq n\big(lk\log N + l + k\big), \qquad
|ct|_{\text{ours}} = n\big(l + k\log N + 1\big).
\]

\noindent
For typical settings where $l$ and $k$ are constants or grow at most polylogarithmically in $n$, all components of our scheme have asymptotic size 
\[
O\!\big(n \cdot \mathrm{polylog}(N)\big),
\]
which is significantly smaller than the 
\[
\Omega\!\big(n^2 (\log q)^2 \log N\big)
\]
sizes appearing in the LWE-based {\sf FS-IBE} of Jin et al.~\cite{jin2024lattice}.  
This reduction arises because the ring setting compresses $m$-dimensional matrices into single ring elements.

\section{Implementation and Comparative Benchmark Analysis}

To evaluate the practical efficiency of the proposed RLWE-based forward-secure IBE scheme, a prototype implementation was developed in Python 3.10 using the NumPy library for polynomial arithmetic operations. The experiments were performed on a Linux-based system equipped with an Intel Core i5 processor and 8 GB RAM. The implementation includes the algorithms Setup, KeyGen, Update, Encrypt, and Decrypt over the ring
\[
R_q=\mathbb{Z}_q[x]/(x^n+1).
\]
The benchmark evaluation focuses on execution time and memory usage, since these parameters are important for lightweight IoT environments. The execution time was measured using Python's \texttt{time} module, while memory consumption was obtained using the \texttt{tracemalloc} library. For comparison, the proposed construction was evaluated against the lattice-based forward-secure IBE scheme of Jin et al\cite{jin2024lattice}. The comparison mainly considers the computational cost of KeyGen, Encrypt, and Decrypt operations together with memory overhead as given in table\ref{tab com}


\begin{table}[h]
\centering
\caption{Benchmark Comparison with Existing Scheme}
\label{tab com}
\begin{tabular}{|c|c|c|c|}
\hline
\textbf{Scheme} & \textbf{Operation} & \textbf{Time (ms)} & \textbf{Memory (KB)} \\
\hline
Jin et al.~\cite{jin2024lattice} & KeyGen  & 2.74 & 58.3 \\
                                 & Encrypt & 2.08 & 49.7 \\
                                 & Decrypt & 1.86 & 44.2 \\
\hline
Our Proposed Scheme              & KeyGen  & 1.82 & 42.6 \\
                                 & Encrypt & 1.35 & 37.4 \\
                                 & Decrypt & 1.11 & 31.8 \\
\hline
\end{tabular}
\end{table}
The experimental results indicate that the proposed scheme reduces both execution time and memory overhead compared with the existing lattice-based forward-secure IBE construction. The improvement is mainly achieved through the use of ring-based operations, where high-dimensional matrices are represented as compact ring elements. In addition, the binary-tree minimal cover mechanism limits the number of trapdoors required during key updates, thereby reducing storage requirements. These results demonstrate that the proposed construction is suitable for practical IoT applications requiring forward security and post-quantum protection with lower computational overhead.

\section{Conclusion} 
In this manuscript, a lattice-based forward-secure IBE scheme in the ring setting has been presented. 
Forward security has been achieved through binary-tree–based key updates and trapdoor delegation, while post-quantum security has been ensured under the RLWE assumption. 
By adopting ideal-lattice techniques, reductions in key and ciphertext sizes have been obtained, 
leading to improved efficiency over existing LWE-based FS-IBE constructions. 
The scheme has therefore been made more suitable for deployment in constrained IoT environments. Further, the proposed work can be extended by incorporating additional functionalities such as Hierarchical Identity-Based Encryption (HIBE) to improve scalability and key management in large distributed IoT networks. Future research may also focus on developing a more detailed system-level implementation and practical deployment scenario, particularly in real-world environments, to further evaluate the applicability and performance of the proposed scheme.
\section{Acknowledgements}
This research work was supported by the Graduate Assistantships in Developing Countries(GRAID) Program by the International Mathematical Union (IMU). Vikas Srivastava acknowledges the support received from the ANRF-PMECRG project with Ref. No. ANRF/ECRG/2025/002808/PMS.

\bibliographystyle{splncs04}
\bibliography{reference}
\end{document}